\newcommand{\TheTitle}{\textbf{Low-rank multi-parametric covariance identification}}
\newtheorem{remark}{Remark}
\newtheorem{proposition}{Proposition}
\newtheorem{definition}{Definition}
\newtheorem{problem}{Problem}
\crefname{problem}{Problem}{Problems}
\crefname{result}{Result}{Results}
\newtheorem{hypothesis}{Hypothesis}
\crefname{hypothesis}{Hypothesis}{Hypotheses}
\newenvironment{proof}{\paragraph{Proof:}}{\hfill$\square$}
\title{{\TheTitle}\thanks{{This work was supported by (i) ``la Caixa'' Banking Foundation (ID 100010434) under project LCF/BQ/AN13/10280009, (ii) the Fonds de la Recherche Scientifique (FNRS) and the Fonds Wetenschappelijk Onderzoek (FWO) -- Vlaanderen under EOS
(Project 30468160), (iii) the ``Communaut\'e fran\c{c}aise de Belgique -- Actions de Recherche Concert\'ees'' (Contract ARC 14/19-060),  (iv) the WBI-World Excellence Fellowship, (v) the United States Department of Energy, Office of Advanced Scientific Computing Research (ASCR).
The second author is supported by the National Physical Laboratory and the Alan Turing Institute. Most of the work was done when the second author was with ICTEAM, UCLouvain, Belgium.}}}
\author{
Antoni Musolas\thanks{Center for Computational Science \& Engineering, Massachusetts Institute of Technology, Cambridge, MA USA (musolas@mit.edu,
    ymarz@mit.edu).}
  \and Estelle Massart\thanks{Mathematical Institute, University of Oxford, OX2 6GG, UK (estelle.massart@maths.ox.ac.uk).} \and Julien M. Hendrickx\thanks{ICTEAM Institute, UCLouvain, Avenue Georges Lema\^itre 4 bte L4.05.01, 1348 Louvain-la-Neuve, Belgium, (julien.hendrickx@uclouvain.be, pa.absil@uclouvain.be)}.  \and P.-A.\ Absil\footnotemark[4] \and
  Youssef Marzouk\footnotemark[2]
}
\def \T {\top \!}
\def \R {\mathbb{R}}
\def \argmin {\arg\!\min}
\renewcommand{\O}[1]{\mathcal{O}(#1)}
\newcommand{\psd}[2]{\mathcal{S}_+(#1,#2)}
\DeclareMathOperator{\Tr}{tr}
\definecolor{darkgreen}{rgb}{0,0.6,0.1}
\definecolor{darkblue}{rgb}{0,0,0.7}
\definecolor{mygreen}{rgb}{0.1,0.7,0.3}
\definecolor{myblue}{rgb}{0,0.5,1}
\definecolor{myred}{rgb}{1,0,0}
\definecolor{mygray}{rgb}{0.6,0.6,0.6}
\begin{document}
\maketitle
\vspace{-0.7cm}
\begin{abstract}
We propose a differential geometric construction for families of low-rank covariance matrices, via interpolation on low-rank matrix manifolds. In contrast with standard parametric covariance classes, these families offer significant flexibility for problem-specific tailoring via the choice of ``anchor'' matrices for the interpolation. Moreover, their low-rank facilitates computational tractability in high dimensions and with limited data. We employ these covariance families for both interpolation and identification, where the latter problem comprises selecting the most representative member of the covariance family given a data set. In this setting, standard procedures such as maximum likelihood estimation are nontrivial because the covariance family is rank-deficient; we resolve this issue by casting the identification problem as distance minimization. We demonstrate the power of these differential geometric families for interpolation and identification in a practical application: wind field covariance approximation for unmanned aerial vehicle navigation.
%Standard parametric covariance families can  be insufficiently flexible for practical applications. Nonparametric approaches may be less restrictive, but neither approach easily allows prior knowledge---such as known covariance matrices at related conditions---to be incorporated. Moreover, most of these methods yield full-rank covariance matrices, which can be impractical for high-dimensional problems. We propose to build parametric covariance functions from low-rank geodesic curves, using representative covariance matrices as anchors. These covariance functions offer significant flexibility for problem-specific tailoring and can be reduced to any given rank for computational tractability. We employ these covariance families for both interpolation and identification, that is, selecting the most representative member of the covariance family given a data set. In the latter setting, standard procedures such as maximum likelihood estimation are nontrivial because the covariance family is rank deficient; we resolve this issue by casting the identification problem as distance minimization. We demonstrate the power of these geodesic families for interpolation and identification in a practical application: wind field covariance approximation for unmanned aerial vehicle navigation.
\end{abstract}

% REQUIRED
%\begin{keywords}
  \textbf{Keywords:} covariance approximation, geodesic, low-rank covariance function, positive-semidefinite matrices, Riemannian metric, optimization on manifolds, maximum likelihood
%\end{keywords}

% REQUIRED
%\begin{AMS}
  \textbf{AMS:} 53C22, 62J10
%\end{AMS}

\section{Introduction}\label{sec:intro}

One of the most fundamental problems in multivariate analysis and uncertainty quantification is the construction of covariance matrices. Covariance matrices are an essential tool in climatology, econometrics, model order reduction, biostatistics, signal processing, and geostatistics, among other applications.
As a specific example (which we shall revisit in this paper), covariance matrices of wind velocity fields \cite{lawrance2010simultaneous,lawrance2011path,lolla2015path,cococcioni2015adaptive} capture the relationships among wind velocity components at different points in space. These relationships enable recursive approximation or updating of the wind field as new pointwise measurements become available. Similarly, in oil reservoirs \cite{oliver2011recent,oliver1997markov}, covariance matrices allow information from borehole measurements to be propagated into more accurate global estimates of the permeability field. In telecommunications \cite{liu2019target}, covariance matrices and their eigenvectors are paramount for discerning between signal and noise.

Widely used methods in spatial statistics  \cite{cressie1999classes,ripley2005spatial,stein2004approximating} include variogram estimation \cite{cressie1980robust,cressie1990origins} (often a first step in kriging \cite{stein2012interpolation,driscoll1998consistent,guhaniyogi2019multivariate}) or tapering of sample covariance matrices \cite{guerci1999theory,furrer2006covariance}. Other regularized covariance estimation methodologies include
% matrix completion \cite{candes2009exact,candes2010matrix},
sparse covariance and precision matrix estimation \cite{cai2011adaptive,friedman2008sparse,cai2011constrained} and many varieties of shrinkage \cite{ledoit2004well,ledoit2012nonlinear,schafer2005shrinkage,ledoit2018optimal}. Many of these methods make rather generic assumptions on the covariance matrix (e.g., sparsity of the precision, or some structure in the spectrum); others (e.g., Mat\'{e}rn covariance kernels or particular variogram models) assume that the covariance is described within some parametric family. The latter can make the estimation problem quite tractable, even with relatively limited data, since the number of degrees of freedom in the family may be small.

% a covariance class, which endows a predefined structure to the estimation space. This space has fewer degrees of freedom (parameters) than the square of the dimension. Therefore, the most representative member of a class can be chosen even with a scarce number of data matrices.

Unfortunately, standard parametric covariance families (e.g., Mat\'{e}rn) \cite{cressie1992statistics,rasmussen2006gaussian,rue2005gaussian} can be insufficiently flexible for practical applications. For instance, in wind field modeling, these covariance families are  insufficiently expressive to capture the non-stationary and multiscale features of the velocity or vorticity \cite{langelaan2011wind,langelaan2012wind,larrabee2014wind}. Nonparametric approaches such as sparse precision estimation may be less restrictive, but neither approach easily allows prior knowledge---such as known wind covariances at other conditions---to be incorporated. Moreover, most of these methods yield full-rank covariance matrices, which are impractical for high-dimensional problems. For example, direct manipulation of full-rank covariances in high dimensional settings might preclude recursive estimation (i.e., conditioning) from being performed online.

In this paper, we propose to build parametric covariance functions from piecewise-B\'ezier interpolation techniques on manifolds, using representative covariance matrices as anchors. (See \cite{musolas2020geodesically} for a related approach using full-rank geodesics.) These covariance functions offer significant flexibility for problem-specific tailoring and can be reduced to any given rank; both of these features enable broad applicability. We use our proposed covariance functions for
identification---finding the most representative member of a family given a data set---and interpolation---given anchors at known or ``labeled'' conditions, predicting covariance matrices at new conditions between those of the anchors. Observe that these objectives do not require an appeal to asymptotic statistical properties, which are not investigated here.
% Observe that these objectives do not require to appeal to statistical estimation concepts, such as asymptotic properties, which are not investigated here.
% YMM: I don't like this sentence so much. We mention MLE and estimation, and thus we necessarily invoke ``statistical concepts.'' What we don't investigate is the asymptotic behavior as the number of data go to infinity.

Interpolation on matrix manifolds has been an active research topic in the last few years; see, e.g.,~\cite{Boumal2011,Samir2012,Hinkle2014,Kim2018,conti2008full}. Here we propose to rely on piecewise-B\'ezier curves and surfaces on manifolds that have been investigated, e.g., in~\cite{absil2016differentiable,Bergmann2018a,gousenbourger2018data,Modin2018}. As for the space to interpolate on, an immediate choice would have been the set of all $n\times n$ covariance matrices, namely the set of $n\times n$ symmetric positive-semidefinite (PSD) matrices; however this set is not a manifold. (Specifically, it is not a submanifold of $\mathbb{R}^{n\times n}$.) Instead, we will consider the set of $n\times n$ covariance matrices of fixed rank $r$, which is known to be a manifold; see, e.g.,~\cite{vandereycken2009embedded}. Among others,~\cite{pennec2006riemannian} and~\cite{moakher2006symmetric} investigated the full-rank case ($r=n$) and pioneered building matrix functions using geodesics on the manifold of symmetric positive-definite (SPD) matrices. For the low-rank case ($r<n$), several geometries are available~\cite{bonnabel2009riemannian,vandereycken2009embedded,vandereycken2012riemannian}. We will resort to the geometry proposed in~\cite{journee2010low} and further developed in~\cite{massart2018quotient}, as it has the crucial advantage of providing a closed-form expression for the endpoint geodesic problem, which appears as a pervasive subproblem in B\'ezier-type interpolation techniques on manifolds.

We point out that univariate interpolation on the manifold of fixed-rank PSD matrices has already been applied, e.g., to the wind field approximation problem in~\cite{Gousenbourger2017}, to protein conformation in~\cite{Li2014}, to computer vision and facial recognition in~\cite{Kacem2018, szczapa2019}, and to parametric model order reduction in~\cite{Massart2019}. The present work is, to our knowledge, a first step towards the multivariate case.

There are three original contributions in this paper. First, we devise new \emph{low-rank} parameterized covariance families, based on given problem-specific anchor covariance matrices. The rank of the anchor matrices is assumed to be equal to some value $r$, usually much smaller than the size $n$ of the matrices. The resulting covariance families are shown to contain only matrices of rank less than or equal to $r$. In high-dimensional applications, this allows reducing the computational cost of matrix manipulations, while still explaining the majority of the variance of the data. Working with low-rank covariance families also results in robustness to \emph{small} data. Second, we propose  minimization of an appropriate loss function as an alternative to maximum likelihood estimation for selecting the most representative member of this covariance family given some data. When the covariance family is low-rank,  maximum likelihood estimation is not trivial, as the probability density of the data (assumed Gaussian) is degenerate. Third, we demonstrate the previous two points in an application: wind field velocity characterization for UAV navigation. We notice that when connecting anchors labeled with different values of the prevailing wind conditions, the values in between correspond to intermediate wind conditions.

%

%\subsection{Plan of the paper}

The rest of this paper is organized as follows.
% We formalize the problem in \Cref{sec:problem}.
We summarize the tools needed to work on the manifold of fixed-rank PSD matrices in \Cref{sec:ssd}. We introduce our new covariance functions in~\Cref{sec:surfaces}, for both the one- and the multi-parameter cases. In \Cref{sec:results}, we present methods to solve the covariance identification problem via distance minimization. In \Cref{sec:app}, we illustrate the behaviors of the different covariance functions on a case study: wind field approximation. % Conclusions are in \Cref{sec:conclu}.

\section{The geometry of the set of positive-semidefinite matrices}\label{sec:ssd}

In this section, we define useful tools to work on the manifold $\psd{r}{n}$ of positive-semidefinite (PSD) matrices, with rank $r$ and size $n\times n$, with $r<n$.

Several metrics have been proposed for this manifold but, to our knowledge, none of them manages to turn it into a complete metric space with a closed-form expression for endpoint geodesics. We use the quotient geometry $\psd{r}{n} \simeq \mathbb{R}_*^{n\times r} / \O{r}$ proposed in~\cite{journee2010low} and further developed in~\cite{massart2018quotient}, with $\mathbb{R}_*^{n\times r}$ endowed with the Euclidean metric. This geometry relies on the fact that a matrix $A \in \psd{r}{n}$ can be factorized as $A=Y_{A}Y_{A}^{\T}$, where the factor $Y_{A}\in \mathbb{R}_*^{n\times r}$ has full column rank. The decomposition is not unique, as each factor $\mathcal{S}_{A} = Y_A Q$, with $Q \in \O{r}$ an orthogonal matrix, leads to the same product. As a consequence, any PSD matrix $A$ is represented by an equivalence class:
\begin{displaymath}
[Y_{A}]=\{Y_{A}Q | Q \in \O{r} \}.
\end{displaymath}
In our computations, we work with representatives of the equivalence classes. For example, the geodesic between two PSD matrices $A$ and $B$ will be computed based on two arbitrary representatives $Y_A$, $Y_B$, of the corresponding equivalence classes. The geodesic will of course be invariant to the choice of the representatives. Moreover, this approach saves computational cost as the representatives are of size $n\times r$, instead of $n\times n$.

In~\cite{massart2018quotient}, the authors propose an expression for the main tools to perform computations on $\psd{r}{n}$, endowed with this geometry. The geodesic ${\varphi}_{A_1 \rightarrow A_2}$ between two PSD matrices $A_1$, $A_2$, with representatives $Y_{A_1}$ and $Y_{A_2}$, is given by:
\begin{equation}
{\varphi}_{A_1 \rightarrow A_2}(t)=Y_{{\varphi}_{A_1 \rightarrow A_2}(t)}Y_{{\varphi}_{A_1 \rightarrow A_2}(t)}^{\T} \quad \mathrm{with} \quad Y_{{\varphi}_{A_1 \rightarrow A_2}(t)}=Y_{A_1}+t\dot{Y}_{A_1 \rightarrow A_2}.
\label{eq:geodesic}
\end{equation}
In this expression, the vector $\dot{Y}_{A_1 \rightarrow A_2}$ is defined as $\dot{Y}_{A_1 \rightarrow A_2}=Y_{A_2}Q^\T-Y_{A_1}$ with $Y_{A_1}^{\T}Y_{A_2}=HQ$ a polar decomposition. In the generic case where $Y_{A_1}^\T Y_{A_2}$ is nonsingular, the polar decomposition is unique and the resulting curve $t \in [0,1] \mapsto \varphi_{A_1 \to A_2}(t)$ is the unique minimizing geodesic between $A_1$ and $A_2$. This curve has the following properties:

\begin{enumerate}
\item ${\varphi}_{A_1 \rightarrow A_2}(0)=A_1$, and ${\varphi}_{A_1 \rightarrow A_2}(1)=A_2$.
\item For each $t\in \mathbb{R}$, ${\varphi}_{A_1 \rightarrow A_2}(t)\in \psd{\leq r}{n}$,
\end{enumerate}
where the notation $\psd{\leq r}{n}$ stands for the set of positive-semidefinite matrices of rank upper-bounded by $r$.

Notice that the last property suggests that this manifold is not a complete metric space for this metric, because the points in the geodesics are not necessarily of rank $r$. However, completeness is not central for statistical modeling, and, as we will see in \Cref{sec:app}, the fact that not all covariance matrices have the same rank will not have any consequence in practice.

We finally mention that~\cite{massart2018quotient} also contains expressions for the exponential and logarithm maps, on which rely the patchwise B\'ezier surfaces introduced in~\Cref{sec:bgeodesic} below. Referring to the PSD matrices discussed above (e.g., $A_1, A_2$) as ``data matrices,'' we make  the following assumption:
\begin{hypothesis}\label{hypo:noCutLocus}
The data matrices are such that all the logarithm and exponential maps to which we refer in the sequel are well-defined.
\end{hypothesis}
\Cref{hypo:noCutLocus} will typically be satisfied when the data matrices are sufficiently close to each other; see~\cite{massart2018quotient} for more information.

Instead of working directly on the quotient manifold (which involves the computation of geodesics, exponential and logarithm maps), a simpler approach consists in working on an affine section of the quotient. Consider an equivalence class $[Y_{A}]$ with $Y_A$ a representative of the class. We define the section of the quotient at $Y_A$ as the set of points:
\begin{equation}\label{eq:section}
  {\mathcal{S}_{A}} \coloneqq \{ {Y}_{A} \left( I + ({Y}_{A}^\T  {Y}_{A})^{-1} S  \right) + {{Y}_{{A}\perp}} K \, \vert \, S^\T = S, \ S \succ - {Y}_{A}^\T {Y}_{A}, \ K \in \R^{(n-r) \times r} \},
\end{equation}
where the matrix ${Y}_{{A}\perp}\in \R^{n \times (n-r)}$ is any orthonormal basis for the orthogonal complement of $Y_A$, i.e., $Y_{A}^\T Y_{{A}\perp}=0$ and ${Y}_{{A}\perp}^\T {Y}_{{A}\perp}=I_{n-r}$. The constraint $S \succ - {Y}_{A}^\T {Y}_{A}$ guarantees that there is at most one representative of each equivalence class $[Y_B]$ in the section, and exactly one under the generic condition that $Y_A^TY_B$ is nonsingular.

Consider the section of the quotient at $Y_{A_1}$. The representative in the section of any equivalence class $[Y_{A_2}]$ (with $Y_{A_1}^TY_{A_2}$ nonsingular) is then $\bar Y_{A_2 } = Y_{A_2} Q^\T$, where $Q$ is the orthogonal factor of the polar decomposition of $Y_{A_1}^\T Y_{A_2}$. Once all the points are projected on the section, we can simply perform Euclidean operations on the section.

%As notion of distance, we use the Frobenius norm of the difference between matrices. That is, for any $A_1,A_2\in S_+(r,n)$:
%
%
%\begin{displaymath}
%d_F(A_{1},A_{2})=||A_{1}-A_{2}||_F=\sqrt{\Tr\big((A_{1}-A_{2})(A_{1}-A_{2})^{\T}\bigl)}.
%\end{displaymath}

%\emcomm{Not sure we should define a distance here, as it is not required to build our covariance functions and our algorithms. It is just used in the cost function of the optimization problem. (Cf question from Julien, about the link between this distance and the one associated to the geodesics)}

%%%%%%%%%%%%%%%%%%%%%%%%%%%%%%%%%%%%%%%%%%%%%%%%%%%%%%%%%%%%%%%%%%%%%%%%%%%%%

\section{Construction of low-rank covariance functions}\label{sec:surfaces}

A low-rank covariance function is a mapping from a set of parameters to a low-rank PSD matrix.

\begin{definition}[\textbf{Low-rank covariance function and family}]\label{def:singular_family}
A $p$-parameter low-rank covariance function is a map ${\varphi}:\mathbb{R}^p \rightarrow \cup_{k=0}^{r}S_{+}(k,n)$, for $r<n$; its corresponding covariance family is the image of ${\varphi}$.
\end{definition}

\subsection{First-order covariance functions}\label{sec:bilinear}

In this section, we consider two possible generalizations of multilinear interpolation to manifolds. The simplest way consists in mapping all the points to a linear approximation of the manifold (here, a section of the quotient), and applying multilinear interpolation on the section. A second approach resorts to the geodesics (generalization of straight lines) on the manifold. It is interesting to notice that both reduce to the one-parameter geodesic \cref{eq:geodesic} in the one-parameter case if the reference point of the section belongs to the geodesic (cf.\ \Cref{rem:coincide}).

\subsubsection{First-order sectional covariance function}\label{sec:psectional}

The main idea is to consider a section \cref{eq:section}, projecting the data matrices to that section and performing multilinear interpolation on it. The definition below presents the sectional covariance function in the case of bilinear interpolation: we explain the steps to obtain it. A schematic representation can be seen in \Cref{fig:trees_sectional}.

\begin{definition}[\textbf{The \emph{sectional} $p$-parameter covariance function}]\label{def:sectional}
The sectional $p$-parameter covariance function is obtained as follows:
\begin{enumerate}
\item Select a member $\bar{Y}_{A_1}$ of the equivalence class $[Y_{A_1}]$.
\item Intersect the equivalence classes of $p$ data matrices $({Y}_{A_i}, i=1\dots p)$ with the section defined by $\bar{Y}_{A_1}$  to obtain:
\begin{displaymath}
\bar{Y}_{A_i}, i=1\dots p.
\end{displaymath}
\item Perform any type of Euclidean multilinear interpolation of the projected data matrices. For instance, in the bilinear case (two parameter and four data matrices):
\begin{displaymath}
\bar{Y}_{{\varphi}_{A_1 \rightarrow \dots\rightarrow A_4}}(t_1,t_2)= \bar{Y}_{A_1}(1-t_1)(1-t_2)+\bar{Y}_{A_2}(1-t_1)t_2+\bar{Y}_{A_3}t_1(1-t_2)+\bar{Y}_{A_4}t_1t_2.
\end{displaymath}
\item Multiply the factor by its transpose to obtain the full matrix:
\begin{displaymath}
{\varphi}_{A_1 \rightarrow A_2 \rightarrow A_3\rightarrow A_4}(t_1,t_2)=\bar{Y}_{{\psi}_{A_1 \rightarrow A_2 \rightarrow A_3\rightarrow A_4}}(t_1,t_2)\bar{Y}_{{\psi}_{A_1 \rightarrow A_2 \rightarrow A_3\rightarrow A_4}}^{\T}(t_1,t_2).
\end{displaymath}
\end{enumerate}
\end{definition}

\begin{figure}[h!]
\centering
  \includegraphics[scale=1]{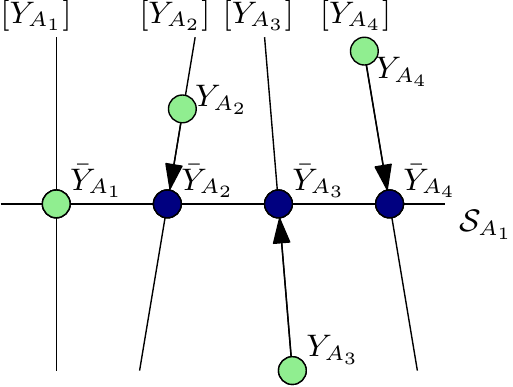}
  \caption{Schematic representation of the sectional covariance function. The vertical lines correspond to equivalence classes and the horizontal line is the section into which the data matrices are projected.}
  \label{fig:trees_sectional}
\end{figure}

\subsubsection{First-order geodesic covariance function}\label{sec:pgeodesic}

By connecting $p$ geodesics of the form in \cref{eq:geodesic}, we can obtain the \emph{geodesic} $p$-parameter covariance function, which is described below and represented in \Cref{fig:configPoints}.

%By concatenating $p$ geodesics of the form in \Cref{eq:geodesic}, we can obtain two types of $p$-parameter geodesic covariance functions, which are described below and represented in \Cref{fig:configPoints}.
%
%\begin{definition}[\textbf{The \emph{unbalanced} $p$-parameter covariance function}]\label{def:pkernel}
%By compounding two one-parameter covariance functions, we obtain:
%
%\begin{displaymath}
%{\varphi}_{A_{1} \rightarrow A_{2}\rightarrow A_{3}}(t_1,t_2)\coloneqq{\varphi}_{\big({\varphi}_{A_{1} \rightarrow A_{2}}(t_1)\big)\rightarrow A_{3}}(t_2).
%\end{displaymath}
%Recursively, we can construct the unbalanced $p$-parameter covariance function.
%\end{definition}

\begin{definition}[\textbf{The \emph{geodesic} $p$-parameter covariance function}]\label{def:pkernelbalanced}
By connecting two one-parameter geodesics, we obtain the geodesic two-parameter covariance function:

\begin{equation}\label{eq:unbalanced}
{\varphi}_{(A_{1} \rightarrow A_{2})\rightarrow (A_{3}\rightarrow A_{4})}(t_1,t_2)\coloneqq{\varphi}_{\big({\varphi}_{A_{1} \rightarrow A_{2}}(t_1)\big)\rightarrow \big({\varphi}_{A_{3} \rightarrow A_{4}}(t_1)\big)}(t_2).
\end{equation}
Recursively, we can construct the geodesic $p$-parameter covariance function.
\end{definition}

\begin{figure}[h!]
\centering
\begin{tikzpicture}[scale = 0.6]
% Nodes
\draw (0,0) node[below left] {$A_1$} ;
\draw (0,0) node {$ \bullet$} ;
\draw (5,0) node[below right] {$A_2$} ;
\draw (5,0) node {$ \bullet$} ;
\draw (5,5) node[above right ] {$A_4$} ;
\draw (5,5) node {$ \bullet$} ;
\draw (0,5) node[above left] {$A_3$} ;
\draw (0,5) node {$ \bullet$} ;
\draw (2,0) node[below] {${\varphi}_{A_{1} \rightarrow A_{2}}(t_1)$} ;
\draw (2,0) node {$ \bullet$} ;
\draw (2,5) node[above] {${\varphi}_{A_{3} \rightarrow A_{4}}(t_1)$} ;
\draw (2,5) node {$ \bullet$} ;
\draw (2,3) node[right] {${\varphi}_{(A_{1} \rightarrow A_{2})\rightarrow (A_{3}\rightarrow A_{4})}(t_1,t_2)$} ;
\draw (2,3) node {$ \bullet$} ;
% Edges
\draw (0,0) -- (5,0);
\draw (5,0) -- (5,5);
\draw (5,5) -- (0,5);
\draw (0,5) -- (0,0);
\draw[dotted] (2,0) -- (2,5);
\draw[myred] (0,0) -- (2,0);
\draw[myred] (0,5) -- (2,5);
\draw[mygreen] (2,0) -- (2,3);
% t1 and t2
\draw[myred] (1,0) node[above] {$t_1$} ;
\draw[myred] (1,5) node[below] {$t_1$} ;
\draw[mygreen] (2,1.5) node[left] {$t_2$} ;
\end{tikzpicture}
\label{fig:configPoints}
\caption{Configuration of the points in a Euclidean setting. Example for two parameters. }
\end{figure}
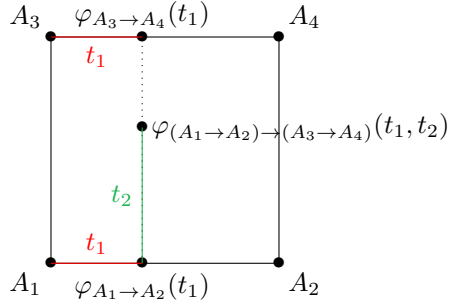

In order to build a geodesic $p$-parameter covariance function, according to the definition, we need $2^p$ data matrices.

%\begin{figure}[h!]
%\centering
%  \includegraphics[scale=1.3]{Figures/trees.pdf}
%  \caption{Unbalanced (left) versus balanced (right) tree representation of geodesic covariance families. Green nodes represents anchor matrices and white circles are combination of geodesics.}
%  \label{fig:trees}
%\end{figure}

\begin{remark}\label{rem:coincide}
For the one-parameter case, if the reference point of the section is on the geodesic, the sectional and geodesic families coincide. Indeed, using the relationships from \Cref{sec:ssd}, notice that the geodesic (geodesic one-parameter function) can be converted to a sectional one-parameter covariance function:
\begin{displaymath}
Y_{{\varphi}_{A_1 \rightarrow A_2}}(t)=Y_{A_1}+t\dot{Y}_{A_1 \rightarrow A_2}=Y_{A_1}+t(Y_{A_2}Q^\T-Y_{A_1})=Y_{A_1}(1-t)+t Y_{A_2}Q^\T
\end{displaymath}
\begin{displaymath}
=Y_{A_1}(1-t)+t\bar{Y}_{A_2},
\end{displaymath}
where $\bar{Y}_{A_2}=Y_{A_2}Q^\T$ is the projection of ${Y}_{A_2}$ into the section defined by ${Y}_{A_1}$.
\end{remark}
%\emcomm{ The two previous definitions are quite vague. For Definition~\ref{def:sectional}, we just say that we apply any Euclidean interpolation algorithm in the section (so, this contains B\'ezier in the section, cf Section~\ref{sec:bsectional} below). In Definition~\ref{def:pkernelbalanced}, we say that we construct the function recursively. This definition is more accurate, but perhaps we should say how many data matrices are needed in order to apply this recursion ($2^p$?). Perhaps a way to remedy all this is to introduce everything in the 2-parameter framework, and to mention that extensions to more parameters are possible.}

%\emcomm{At some point, we should also introduce the fact that we are working with a grid of points. We should do this before introducing B\'ezier-like methods. I've inserted a tentative explanation below.}

%In this paper, we focus on the two-parameter case, i.e., $p = 2$. However, all the methods can be extended to more than two parameters.

\subsection{Piecewise first-order covariance functions}
In practical applications, it is desirable to build covariance functions by patches. Each patch only depends on a reduced number of data matrices, usually those that are ``closest'' in some sense.
Let us illustrate using patches of two-parameter covariance functions. We assume that the data matrices used to define the family (equivalently called the ``anchor'' covariance matrices) are described by a grid of indices: this collection consists of matrices $(A_{i, j})$, with $A_{i,j} \in \psd{r}{n}$, and with $i \in \{0, \dots, N_1\}$ and $j \in \{0, \dots, N_2\}$.  The two covariance families proposed in the previous section can be computed on each patch of the grid, to build patchwise multilinear surfaces. The resulting surfaces will be denoted $\varphi^\mathrm{LS}$ and $\varphi^{\mathrm{LG}}$, where the $L$ stands for Linear (those surfaces were obtained as generalization of linear interpolation on manifolds), the $S$ for Section, and the $G$ for Geodesic.

\begin{definition}\label{def:1stOrderCovFunctions}
Given a grid of points $(A_{i, j})$, with $A_{i,j} \in \psd{r}{n}$, and with $i \in \{0, 1, \dots, N_1\}$ and $j \in \{0, 1, \dots, N_2\}$, the first-order patchwise surfaces $\varphi^{\mathrm{LS}}: [0, N_1] \times [0,N_2]  \to \psd{\leq r}{n}$ and $\varphi^{\mathrm{LG}}: [0, N_1] \times [0,N_2]  \to \psd{\leq r}{n}$ are respectively the unions of the surfaces $\varphi^{\mathrm{LS}}_{l,m}$ and $\varphi^{\mathrm{LG}}_{l,m}$, each defined as follows.
Let $l  =  0, \dots, N_1-1$ and $m = 0, \dots, N_2 - 1$ be the indices of a patch. The function $\varphi^{\mathrm{LS}}_{l,m}$, delimited by the data matrices $A_{l,m}$, $A_{l+1,m}$, $A_{l,m+1}$, $A_{l+1,m+1}$, is  the sectional two-parameter covariance function defined in~\Cref{def:sectional}, with bilinear interpolation of the representatives of the points $A_{l,m}$, $A_{l+1,m}$, $A_{l,m+1}$, $A_{l+1,m+1}$ in the section chosen for that patch. Examples of choices of sections are given in \Cref{sec:app}.

The function $\varphi^{\mathrm{LG}}_{l,m}$ is the geodesic two-parameter covariance function, defined in~\Cref{def:pkernelbalanced}, that interpolates the points $A_{l,m}$, $A_{l+1,m}$, $A_{l,m+1}$, $A_{l+1,m+1}$.
\end{definition}

\begin{proposition} \label{prop:zeroMeasure}
If for each patch, the data matrices located on the corners of the patch satisfy \Cref{hypo:noCutLocus}, it holds, except possibly for a zero measure set, that $\varphi^{\mathrm{LS}}(t_1,t_2) \in \psd{r}{n}$ and $\varphi^{\mathrm{LG}}(t_1,t_2) \in \psd{r}{n}$.
\end{proposition}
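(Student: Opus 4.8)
The plan is to reduce both assertions to a single statement about minors of the matrix factors. If a value of either surface is written as $YY^{\T}$ with $Y\in\R^{n\times r}$ the factor produced by the construction, then $\operatorname{rank}(YY^{\T})=\operatorname{rank}(Y)$, so the value lies in $\psd{r}{n}$ exactly when some $r\times r$ minor of $Y$ is nonzero, and it is rank-deficient exactly on the common zero set of all $r\times r$ minors of $Y$. On each patch these minors are functions of $(t_1,t_2)$ that are real-analytic (indeed polynomial, in the sectional case) on a suitable connected open set, so it suffices, patch by patch, to exhibit one $r\times r$ minor of the factor that is not identically zero: the rank-deficient set is then contained in the zero set of that minor, which has Lebesgue measure zero because the zero set of a not-identically-zero real-analytic function on a connected open subset of $\R^{2}$ is null. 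A finite union over patches (and, for $\varphi^{\mathrm{LG}}$, over finitely many subintervals within a patch) preserves nullity.

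For $\varphi^{\mathrm{LS}}$ I would argue as follows. On the patch with corners $A_{l,m},A_{l+1,m},A_{l,m+1},A_{l+1,m+1}$, the hypothesis guarantees that the polar decompositions used to project the corners onto the chosen section (\cref{eq:section}) are nonsingular, so the bilinear factor $\bar Y(t_1,t_2)$ of \Cref{def:sectional} is well defined and each of its entries is affine in $t_1$ and in $t_2$; hence every $r\times r$ minor is a polynomial in $(t_1,t_2)$. Evaluating $\bar Y$ at the corner corresponding to $A_{l,m}$ yields a representative of $[Y_{A_{l,m}}]$, which has full column rank $r$ since $A_{l,m}\in\psd{r}{n}$; therefore some $r\times r$ minor is nonzero there and is thus a not-identically-zero polynomial. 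This gives the claim for $\varphi^{\mathrm{LS}}$.

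For $\varphi^{\mathrm{LG}}$, fix a patch and write $\varphi^{\mathrm{LG}}_{l,m}(t_1,t_2)=\varphi_{C(t_1)\to D(t_1)}(t_2)$, where $C(t_1)$ and $D(t_1)$ are the two inner geodesics (\cref{eq:geodesic}), each joining a pair of the four corner matrices of the patch, and the outer geodesic is again of the form \cref{eq:geodesic}. By hypothesis both inner geodesics are well defined on $[0,1]$, and the factors $Y_{C(t_1)},Y_{D(t_1)}$ are affine in $t_1$. The outer geodesic, hence the factor $Y_{\varphi^{\mathrm{LG}}_{l,m}(t_1,t_2)}=(1-t_2)Y_{C(t_1)}+t_2\,Y_{D(t_1)}\tilde Q(t_1)^{\T}$ with $\tilde Q(t_1)$ the orthogonal factor of a polar decomposition of $Y_{C(t_1)}^{\T}Y_{D(t_1)}$, is well defined wherever $Y_{C(t_1)}^{\T}Y_{D(t_1)}$ is nonsingular. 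Now $Y_{C(t_1)}^{\T}Y_{D(t_1)}$ is polynomial in $t_1$, and at $t_1=0$ it equals $Y_{C(0)}^{\T}Y_{D(0)}$ with $C(0),D(0)$ corner matrices joined by a geodesic appearing in the construction; that geodesic is well defined by hypothesis, so $Y_{C(0)}^{\T}Y_{D(0)}$ is nonsingular. Hence $\det\!\big(Y_{C(t_1)}^{\T}Y_{D(t_1)}\big)$ is a not-identically-zero polynomial, so $Y_{C(t_1)}^{\T}Y_{D(t_1)}$ is nonsingular for all $t_1$ outside a finite set $E$. On each connected component $J$ of $[0,1]\setminus E$ the polar factor $\tilde Q(t_1)$ is real-analytic (the polar decomposition is real-analytic on the invertible matrices), hence so is each $r\times r$ minor of $Y_{\varphi^{\mathrm{LG}}_{l,m}}$ on $J\times(0,1)$. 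Finally, the affine factor $Y_{C(\cdot)}$ has rank $r$ at $t_1=0$, hence at all but finitely many $t_1$, so $J$ contains a point $t_1^{\star}$ with $\operatorname{rank}Y_{C(t_1^{\star})}=r$; evaluating the factor at $(t_1^{\star},0)$ gives $Y_{C(t_1^{\star})}$, so one of its $r\times r$ minors is nonzero at $(t_1^{\star},0)$ and, by continuity, at some point of $J\times(0,1)$, hence is not identically zero there. Summing the resulting null sets over the finitely many components $J$ and patches, together with the null sets $E\times[0,1]$ and $[0,1]\times\{0,1\}$ (which also contain any parameter at which $\varphi^{\mathrm{LG}}$ fails to be well defined), proves the claim for $\varphi^{\mathrm{LG}}$.

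The main obstacle I anticipate is the nonpolynomial dependence of $\varphi^{\mathrm{LG}}$ on $t_1$ through the polar factor $\tilde Q(t_1)$: this is what forces the passage from ``polynomial, hence algebraic null set'' to ``real-analytic on each component where $Y_{C}^{\T}Y_{D}$ is invertible.'' A minor point, easily dispatched, is measurability of the rank-deficient set: on the open set where $\varphi^{\mathrm{LG}}$ is continuous (the complement of $E\times[0,1]$) it is the preimage of the closed set of matrices of rank $\le r-1$, hence closed, and outside that set it is contained in the already-null set $E\times[0,1]$.
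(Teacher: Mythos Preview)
Your argument is correct and follows essentially the same strategy as the paper's: exhibit a real-analytic function of $(t_1,t_2)$ whose zero set contains the rank-deficient locus (you use an $r\times r$ minor of the factor; the paper uses $\det(Y_\varphi^{\T} Y_\varphi)$, which by Cauchy--Binet is the sum of squares of those minors), check that it does not vanish identically by evaluating at a corner, and invoke the standard fact that the zero set of a nontrivial real-analytic function on a connected open set has Lebesgue measure zero.

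The one substantive difference is your treatment of $\varphi^{\mathrm{LG}}$. You invoke \Cref{hypo:noCutLocus} only at $t_1=0$, deduce that $\det\!\big(Y_{C}^{\T}Y_{D}\big)$ is a nonzero polynomial, and then excise its finite zero set $E$, working component by component on $[0,1]\setminus E$. The paper instead reads \Cref{hypo:noCutLocus} as requiring \emph{all} logarithm and exponential maps appearing in the construction to be well defined---in particular the outer geodesic at every $t_1\in[0,1]$---which forces $Y_{C(t_1)}^{\T}Y_{D(t_1)}$ to be nonsingular on the whole interval. Under that reading $E\cap[0,1]=\emptyset$, the polar factor $\tilde Q(t_1)$ is real-analytic on all of $[0,1]$, and no decomposition into components is needed. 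Your route is slightly more self-contained (it does not rely on the cited consequence of the hypothesis at intermediate $t_1$); theirs is shorter and avoids the minor bookkeeping with half-open components of $[0,1]\setminus E$.
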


\begin{proof}
For $s \in \{LS, LG \}$, and for $(t_1,t_2) \in [0,1] \times [0,1]$, let $Y_\varphi^s(t_1,t_2)$ be a representative of the equivalence class associated with $\varphi^s(t_1,t_2)$. Consider the function $f : [0,1] \times [0,1] \to \mathbb{R}: (t_1,t_2) \mapsto \det(Y_\varphi^{s}(t_1,t_2)^\T Y_\varphi^{s}(t_1,t_2))$. This function is zero if and only if $\varphi^s(t_1,t_2)$ has rank $k < r$. Moreover, under \Cref{hypo:noCutLocus}, this function is real analytic. Indeed, in the case $s = LS$, it is a polynomial in $t_1$ and $t_2$, which is real analytic. In the case $s = LG$, it can be readily checked that:
\[ Y_\varphi^s = (1-t_1) (1-t_2) Y_{A_1} +  t_1 (1-t_2) Y_{A_2} Q_{1-2}^\T +  (1-t_1) t_2 Y_{A_3} Q(t_1)^\T + t_1 t_2 Y_{A_4} Q_{3-4}^\T Q(t_1)^\T, \]
where $Q_{1-2}$ (resp.\ $Q_{3-4}$) is the orthogonal factor of the polar decomposition of $Y_{A_1}^\T Y_{A_2}$ (resp.\ $Y_{A_3}^\T Y_{A_4}$), and  $Q(t_1)$ is the orthogonal factor of the polar decomposition of the matrix $M \coloneqq ((1-t_1) Y_{A_1} + t_1 Y_{A_2}Q_{1-2}^\T)^\T ((1-t_1) Y_{A_3} + t_1 Y_{A_4}Q_{3-4}^\T)$; see~\cite{massart2018quotient} for more information.
If \Cref{hypo:noCutLocus} holds, this matrix is nonsingular for all $t_1 \in [0,1]$~\cite{massart2018quotient}. The  polar  decomposition of a one-variable matrix function $t_1 \mapsto M(t_1)$, with $M(t_1)$ nonsingular, is real analytic~\cite[Theorem 3]{mehrmann1993numerical}. We then conclude the proof using the fact that the set of zeros of a real analytic function $f: \mathbb{R}^n \to \mathbb{R}$ has measure zero; see, e.g.,~\cite[Remark 5.23]{kuchment2016overview}.
\end{proof}

\subsection{Higher-order covariance functions using B\'ezier curves}
The previous section presented two families defined as generalizations of multilinear interpolation on the manifold. In this section, we consider higher-order interpolation on the manifold. We focus on patchwise B\'ezier surfaces on the grid.
Again, we will distinguish two cases: methods resorting to Euclidean algorithms in a section of the manifold and methods based on successive evaluations of geodesics.

\subsubsection{Higher-order sectional covariance function}\label{sec:bsectional}
This method consists in first projecting all the data matrices on a section of the manifold (cf.~\Cref{eq:section}), and then using a classical Euclidean B\'ezier interpolation algorithm in the section.

 We focus on patchwise cubic B\'ezier surfaces. Those surfaces are defined by a set of control points $(b_{i,j})_{i, j = 0, \dots, 3}$, with $b_{i,j} \in \R^{n \times r} \ \forall i,j$. The cubic B\'ezier surface is defined as:

\begin{align*}
   \beta : [0,1] \times [0,1] &\to \R^{n \times r} : \\
    (t_1, t_2) \mapsto & \beta(t_1, t_2;(b_{i,j})_{i, j = 0, \dots, 3}) \coloneqq \sum_{i = 0}^3 \sum_{j = 0}^3 b_{i,j} B_{i,3}(t_1) B_{j,3}(t_2),
\end{align*}
where $B_{i,3}(t) \coloneqq  \sum_{i = 0}^3 \begin{pmatrix}  3 \\ i \end{pmatrix} t^i (1-t)^{3-i},$ with $t \in [0,1],$ is the Bernstein polynomial of order $3$.

We define the patchwise cubic B\'ezier surface on the section, denoted  ${\varphi}^{\mathrm{BS}}$, as follows.

\begin{definition} \label{def:BezierCovFunctionSection}
Assume that we have a set of data matrices $(A_{i,j})$, with $A_{i,j} \in \psd{r}{n}$, and with $i = 0, \dots, N_1$ and $j = 0, \dots, N_2$. Choose a section of the quotient manifold and map all the data matrices to this section to obtain $(Y_{i,j})$, with $Y_{i,j} \in \R_*^{n \times r}$, $i = 0, \dots, N_1$ and $j = 0, \dots, N_2$. Let $Y_{\varphi^{\mathrm{BS}}}: [0,N_1] \times [0,N_2] \to \R^{n \times r}$ be the patchwise B\'ezier surface interpolating the data matrices $(Y_{i,j}),$ and computed as in~\cite{absil2016differentiable}. The surface ${\varphi}^{\mathrm{BS}}$ is obtained as follows.
\begin{equation}
    \varphi^{\mathrm{BS}}: [0,N_1] \times [0,N_2] \to \psd{\leq r}{n} :  (t_1, t_2) \mapsto Y_{\varphi^{\mathrm{BS}}}(t_1,t_2) Y_{\varphi^{\mathrm{BS}}}(t_1,t_2)^\T.  \label{eq:beziersection}
\end{equation}
\end{definition}

\begin{proposition}
 Apart from possibly a zero measure set, it holds that $\varphi^{\mathrm{BS}} \in \psd{r}{n}$.
\end{proposition}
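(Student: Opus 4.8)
The plan is to mimic, in a simpler form, the argument used for \Cref{prop:zeroMeasure}; here no polar decomposition intervenes, so the analyticity part becomes elementary. For $(t_1,t_2) \in [0,N_1]\times[0,N_2]$, let $Y_{\varphi^{\mathrm{BS}}}(t_1,t_2) \in \R^{n\times r}$ denote the patchwise B\'ezier surface of \Cref{def:BezierCovFunctionSection}, and set
\[ f : [0,N_1]\times[0,N_2] \to \R : (t_1,t_2) \mapsto \det\big(Y_{\varphi^{\mathrm{BS}}}(t_1,t_2)^\T Y_{\varphi^{\mathrm{BS}}}(t_1,t_2)\big). \]
Since $Y_{\varphi^{\mathrm{BS}}}(t_1,t_2)^\T Y_{\varphi^{\mathrm{BS}}}(t_1,t_2)$ is the $r\times r$ Gram matrix of the columns of $Y_{\varphi^{\mathrm{BS}}}(t_1,t_2)$, we have $f(t_1,t_2)=0$ if and only if $Y_{\varphi^{\mathrm{BS}}}(t_1,t_2)$ has rank strictly less than $r$, i.e., if and only if $\varphi^{\mathrm{BS}}(t_1,t_2) \notin \psd{r}{n}$. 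Hence it suffices to prove that the zero set of $f$ has measure zero.

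Next I would observe that, in the sectional construction, all computations take place in the Euclidean space $\R^{n\times r}$: the control points $(b_{i,j})$ produced by the patchwise B\'ezier scheme of~\cite{absil2016differentiable} are fixed affine combinations of the projected data matrices $(Y_{i,j})$, and on a patch with indices $l\in\{0,\dots,N_1-1\}$, $m\in\{0,\dots,N_2-1\}$ the surface reads $Y_{\varphi^{\mathrm{BS}}}(t_1,t_2) = \sum_{i,j=0}^3 b_{i,j}\, B_{i,3}(t_1-l)\, B_{j,3}(t_2-m)$, a bivariate polynomial in $(t_1,t_2)$ with matrix coefficients. Therefore $f$ restricted to that patch is a polynomial in $(t_1,t_2)$, in particular real analytic. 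Moreover $f$ is not identically zero on any patch: the B\'ezier surface interpolates the data at the grid nodes, so $Y_{\varphi^{\mathrm{BS}}}(l,m) = Y_{l,m}$, and $Y_{l,m}$ is a representative of $A_{l,m}\in\psd{r}{n}$, hence has full column rank $r$, giving $f(l,m) = \det(Y_{l,m}^\T Y_{l,m}) > 0$. Since the zero set of a nonzero real-analytic function on an open set has measure zero~\cite[Remark 5.23]{kuchment2016overview}, the set of $(t_1,t_2)$ in the interior of each patch at which $f$ vanishes has measure zero; taking the finite union over all patches yields that $\{(t_1,t_2): \varphi^{\mathrm{BS}}(t_1,t_2)\notin\psd{r}{n}\}$ has measure zero, which is the claim.

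I do not expect a substantial obstacle: the proof is a polynomial analogue of \Cref{prop:zeroMeasure}, with the advantage that real analyticity is immediate. The only points deserving care are (i) confirming that the construction of~\cite{absil2016differentiable} indeed yields control points that are fixed affine combinations of the $Y_{i,j}$, so that $Y_{\varphi^{\mathrm{BS}}}$ is genuinely polynomial on each patch (and not merely piecewise smooth); and (ii) the implicit well-posedness of mapping all the data matrices into a single section~\cref{eq:section}, which requires each data matrix to admit a (unique) representative in the chosen section --- a genericity/closeness condition in the same spirit as \Cref{hypo:noCutLocus}; I would state that assumption explicitly just before the proof if it is not already assumed to hold.
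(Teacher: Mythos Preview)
Your proposal is correct and follows essentially the same approach as the paper: the paper's own proof is a one-sentence ``immediate extension of~\Cref{prop:zeroMeasure}, using the fact that $Y_\varphi^{\mathrm{BS}}(t_1,t_2)$ is a polynomial function of $t_1$ and $t_2$.'' You have simply unpacked this in more detail, and your additional care in verifying that $f$ is not identically zero on each patch (via interpolation at the grid nodes) and in flagging the well-posedness of the section projection makes explicit points the paper leaves implicit.
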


\begin{proof}
For all $(t_1, t_2) \in  [0,1] \times [0,1]$, let $Y_\varphi^{\mathrm{BS}}(t_1,t_2)$ be a representative in $\mathbb{R}_*^{n \times p}$ of $\varphi^{\mathrm{BS}}(t_1,t_2)$. The proof is an immediate extension of~\Cref{prop:zeroMeasure}, using the fact that  $Y_\varphi^{\mathrm{BS}}(t_1,t_2)$ is a polynomial function of $t_1$ and $t_2$.
\end{proof}
\subsubsection{Higher-order covariance function based on the exp and log}\label{sec:bgeodesic}
In this case, we refer to the B\'ezier surface interpolation algorithm on manifolds, proposed in~\cite{absil2016differentiable}. This algorithm relies on the expressions for the logarithm and the exponential map, provided in~\cite{massart2018quotient} for the manifold $\psd{r}{n}$. We define the surface   ${\varphi}^{\mathrm{BG}}$ (patchwise B\'ezier-like on the manifold) as follows.
\begin{definition} \label{def:BezierCovFunctionManifold}
 Let $(A_{i,j})$ be a data set, with $A_{i,j} \in \psd{r}{n}$, and with $i = 0, \dots, N_1$, $j = 0, \dots, N_2$.  The surface ${\varphi}^{\mathrm{BG}}:[0, N_1] \times [0,N_2] \to \psd{\leq r}{n}$ is the surface obtained by applying the B\'ezier surface interpolation algorithm on manifolds proposed in~\cite{absil2016differentiable}, with type-II reconstruction (cf.~\cite[Definition 4]{absil2016differentiable}), and with the control points chosen as suggested in~\cite{Absil2016a}.
\end{definition}

\begin{proposition}
Under~\Cref{hypo:noCutLocus}, it holds that $\varphi^{\mathrm{BG}} \in \psd{r}{n}$, apart from possibly a zero measure set.
\end{proposition}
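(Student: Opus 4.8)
The plan is to mimic the structure of the proof of \Cref{prop:zeroMeasure}, reducing the rank-deficiency condition to the vanishing of a real analytic function of $(t_1,t_2)$ and then invoking the fact that the zero set of a nonzero real analytic function has measure zero. First I would fix a patch, with corner data matrices $A_{l,m}, A_{l+1,m}, A_{l,m+1}, A_{l+1,m+1}$ (renamed $A_1,\dots,A_4$ for brevity), and let $Y_{\varphi^{\mathrm{BG}}}(t_1,t_2)\in\R_*^{n\times r}$ denote a representative of $\varphi^{\mathrm{BG}}(t_1,t_2)$ restricted to that patch. As in \Cref{prop:zeroMeasure}, I would introduce $f(t_1,t_2) \coloneqq \det\bigl(Y_{\varphi^{\mathrm{BG}}}(t_1,t_2)^\T Y_{\varphi^{\mathrm{BG}}}(t_1,t_2)\bigr)$, which vanishes precisely when $\varphi^{\mathrm{BG}}(t_1,t_2)$ has rank strictly less than $r$. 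It then suffices to show that $f$ is real analytic on $[0,1]^2$ and not identically zero; the conclusion on the whole surface follows by taking the finite union over patches, since a finite union of measure-zero sets has measure zero.

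The main work is establishing real analyticity of $Y_{\varphi^{\mathrm{BG}}}$ as a function of $(t_1,t_2)$. The type-II B\'ezier surface reconstruction algorithm of~\cite{absil2016differentiable} builds the surface from a finite number of evaluations of geodesics, exponential maps, and logarithm maps on $\psd{r}{n}$, composed with the polynomial Bernstein weights and the (polynomial/rational) control-point formulas from~\cite{Absil2016a}. Under \Cref{hypo:noCutLocus}, each of these maps is well-defined along the relevant arguments; moreover, the endpoint geodesic and the logarithm map on this quotient geometry are, away from the cut locus, smooth and in fact real analytic in their endpoints, because they are obtained from a polar decomposition of a nonsingular matrix, and the polar factor of a real analytic family of nonsingular matrices is real analytic~\cite[Theorem 3]{mehrmann1993numerical} — exactly the ingredient already used in the proof of \Cref{prop:zeroMeasure}. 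Since real analyticity is preserved under composition, sums, and products, a careful bookkeeping through the finitely many steps of the algorithm shows $Y_{\varphi^{\mathrm{BG}}}(t_1,t_2)$ is real analytic, hence so is $f$. I would present this as the natural extension of the $\varphi^{\mathrm{LG}}$ case treated in \Cref{prop:zeroMeasure}, where the only new feature is that higher-order B\'ezier control nets involve more compositions of the same elementary real analytic building blocks.

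To finish, I would argue $f\not\equiv 0$: since the surface interpolates the data matrices $A_{i,j}\in\psd{r}{n}$, at the grid nodes $\varphi^{\mathrm{BG}}$ has rank exactly $r$, so $f$ is nonzero at those points, hence $f$ is not the zero function. Then \cite[Remark 5.23]{kuchment2016overview} gives that $\{(t_1,t_2): f(t_1,t_2)=0\}$ has Lebesgue measure zero in the patch, and the union over the $N_1 N_2$ patches is still measure zero. The step I expect to be the main obstacle is the analyticity bookkeeping for the type-II reconstruction: one must check that every operation invoked by the algorithm of~\cite{absil2016differentiable,Absil2016a} — including the control-point construction, which may involve solving small linear systems whose coefficients depend on the data — stays within the real analytic category under \Cref{hypo:noCutLocus}, and in particular that no step introduces a non-smooth selection (e.g.\ a non-unique polar factor). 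Since the cited results guarantee uniqueness of the polar decomposition for nonsingular matrices and \Cref{hypo:noCutLocus} rules out cut-locus pathologies, this is expected to go through, but it is the part that requires genuine care rather than a one-line appeal to \Cref{prop:zeroMeasure}.
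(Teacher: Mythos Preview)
Your proposal is correct and follows the same overall strategy as the paper: set $f(t_1,t_2)=\det(Y_{\varphi^{\mathrm{BG}}}^\T Y_{\varphi^{\mathrm{BG}}})$, show it is real analytic and not identically zero, and conclude via~\cite[Remark 5.23]{kuchment2016overview}. The one substantive difference is how the analyticity of the polar factors is handled. You argue generically that ``the polar factor of a real analytic family of nonsingular matrices is real analytic,'' and then flag the multi-variable bookkeeping as the main obstacle. The paper instead exploits a structural feature of the type-II reconstruction of~\cite{absil2016differentiable}: interpolation is performed first along one variable and then along the other, so every polar decomposition that arises in building $Y_{\varphi^{\mathrm{BG}}}(t_1,t_2)$ is the orthogonal factor of a matrix function of a \emph{single} variable $t_1$. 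This puts one squarely in the setting of~\cite[Theorem~3]{mehrmann1993numerical} already used in \Cref{prop:zeroMeasure}, and dissolves exactly the bookkeeping worry you raised. Your compositional argument is also valid (the polar factor $M(M^\T M)^{-1/2}$ is real analytic in the entries of a nonsingular $M$), but the paper's observation is the cleaner route here.
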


\begin{proof}
For all $(t_1, t_2) \in  [0,1] \times [0,1]$, let $Y_\varphi^{\mathrm{BG}}(t_1,t_2)$ be a representative in $\mathbb{R}_*^{n \times p}$ of $\varphi^{\mathrm{BG}}(t_1,t_2)$. The proof is an immediate extension of~\Cref{prop:zeroMeasure}, using the fact that  $Y_\varphi^{\mathrm{BG}}(t_1,t_2)$ can be written only in terms of polynomial terms in the variables $t_1$ and $t_2$, and orthogonal factors of some real analytic matrix functions $t_1 \mapsto M(t_1)$. (This last fact is a direct consequence of the reconstruction method used: interpolation is first performed along one variable, and then along the other.)
\end{proof}

\subsection{Interpolation of labelled matrices using covariance functions}\label{sec:interpol}
If the data matrices $(A_i)$ are labelled by certain known parameters $x$, i.e., $A_i = A(x_i)$, it is also possible to use the covariance functions in \Cref{sec:surfaces} to perform interpolation. (In  \Cref{sec:app}, for example, $x$ will correspond to the magnitude and heading of the prevailing wind.)

\begin{problem}[{\textbf{Interpolation with low-rank covariance functions}}]\label{prob:interpol}
Given matrices $A(x_i) \in \psd{r}{n}$ for $x_0,x_1,\dots,x_m\in\mathbb{R}^p$ and an associated low-rank covariance function, evaluate this function for any $x\in\mathbb{R}^p$.
% \todo[inline]{This definition applies to \emph{any} low-rank covariance function that one might construct from the $A(x_i)$, as long as the $(x_i)$ are known. And there are many choices of low-rank covariance functions, as described in the previous section. I tried modifying to make this clearer.
% %
% Old version: ``Given matrices $A(x_i) \in \psd{r}{n}$ for $x_0,x_1,\dots,x_m\in\mathbb{R}^p$, find a low-rank covariance function $A(x)$ (\Cref{def:singular_family}) for any $x\in\mathbb{R}^p$.''
% %
% One lingering dissatisfaction: we don't explain exactly how to pass from $x$ to the input parameter space of the low-rank covariance family. I tried to remedy this below.}
 \end{problem}

Of course, performing interpolation requires mapping from the label $x \in \mathbb{R}^p$ to an element of the input domain of the chosen low-rank covariance function. For example, when using the two-parameter covariance functions detailed above, constructed from $N_1 N_2$ data matrices, we need to map from a subset of $\mathbb{R}^2$ to $[0, N_1] \times [0, N_2]$. We will usually use  affine mappings for this purpose; more details are given in \Cref{sec:app}, where we evaluate the interpolation capabilities of one-parameter (\Cref{sec:cs_interpol}) and multi-parameter (\Cref{sec:interpol_2p}) covariance functions.

% \section{Problem statement}\label{sec:problem}
% \todo{I moved this from elsewhere. It seemed very confusing to put this section at the start of the paper. TODO: Harmonize with the section introduction below.}

% One of our core objectives is to define low-rank $p$-parameter covariance functions ${\varphi}_{A_{1}, \dots, A_{m}}(t_1, \dots, t_p)$ using $m$ low-rank matrices $\{A_{i} \}_{i=1}^m$ as anchors. Having constructed these families, we then wish to solve the associated covariance identification problem.
% %
% Let $y_1,\dots, y_q \in \mathbb{R}^n$ be the data and let $\widehat{C} \in \mathbb{R}^{n \times n}$ be the corresponding sample covariance matrix.

%The Frobenius norm is the standard in practical applications and its associated distance is the Euclidean matrix form. \Cref{prob:minsingular} does not have any restriction on the rank of ${\varphi}_{A_{1}, \dots, A_{m}}(t)$.

\section{Covariance identification using distance minimization}\label{sec:results}
Having proposed multi-parameter low-rank covariance families in the previous section, we can now describe identification procedures within such families. That is, given a data set $(y_i)_{i=1}^q$ (assumed to be centered, with $y_i \in \mathbb{R}^n$) from which we construct a sample covariance matrix $\widehat{C} = \frac{1}{q}\sum_i y_i y_i^\top$,
we would like to find the most representative member of a family.

A widely used methodology for selecting a representative member of any parametric covariance family (like ${\varphi}_{A_{1}, \dots, A_{m}}(t)$) is maximum likelihood estimation. That is, the data are assumed to have a certain probability distribution, e.g., $Y_i\sim N\big{(}0,{\varphi}_{A_{1}, \dots, A_{m}}(t)\big{)}$, and we choose $t\in \mathbb{R}^{p}$ to maximize the resulting likelihood function $p(y_1, \ldots, y_q \vert t)$. Since ${\varphi}_{A_{1}, \dots, A_{m}}(t)$ are low-rank, however, the Gaussian distribution of the data is degenerate. Thus the associated probability density function is not well-defined for generic $y_i$ and maximizing the likelihood becomes non-trivial. (If the matrices were instead SPD and as $q \to \infty$, this problem is equivalent to minimizing the Kullback--Leibler divergence of $N(0, \widehat{C})$ from $N\big{(}0,{\varphi}_{A_{1}, \dots, A_{m}}(t)\big{)}$, known as reverse information projection \cite{gallager1968information,cover2012elements,csiszar2004information}.)
%
%\begin{problem}[{\textbf{Covariance estimation using maximum likelihood}}]\label{prob:maxlh}
%\begin{align}
%{\arg\max}_{t \in \mathbb{R}^{p}}\;& \sum_{i=1}^{q} \log\bigl(\Pr(Y=y_i;t)\bigr),\nonumber
%\\
%Y\sim N&\bigl(0,{\varphi}_{A_{1}, \dots, A_{m}}(t)\bigr). \nonumber
%\end{align}
%\end{problem}

Rather than maximizing the likelihood, we propose to minimize a particular distance from the covariance function to the sample covariance matrix $\widehat{C}$.
\begin{problem}[\textbf{Minimum Frobenius distance covariance identification}]\label{prob:minsingular}
\begin{displaymath}
{\arg\min}_{t\in \mathbb{R}^{p}}\;d_F\big{(}{\varphi}_{A_{1}, \dots, A_{m}}(t),\widehat{C}\big{)},\end{displaymath}
where $d_F(A_{1},A_{2})=\| A_{1}-A_{2} \|_F$ is the Frobenius distance.
\end{problem}
This is a particular instance of ``minimum distance estimation''; such estimators, in general, have a long history in statistics \cite{wolfowitz1957minimum}.

%, i.e., we solve in this section the optimization problem
%\begin{equation*}
%    \min_{t_1, \dots, t_p} d_F(\varphi (t_1,\dots, t_p), \widehat{C}).
%\end{equation*}

We now discuss solutions to \Cref{prob:minsingular} given particular constructions of the covariance function $\varphi(t)$.
Since the geodesic and sectional approaches to define covariance functions coincide for the one-parameter case under some conditions (cf.\ \Cref{rem:coincide}), we divide this section in two parts: the one-parameter case ($p = 1$) and the multi-parameter case. For the latter, we focus on the case of two parameters ($p = 2$).

\subsection{One-parameter first-order covariance function}

For $p=1$, the covariance function $t_1 \mapsto \varphi(t_1)$ is simply the geodesic between the two data matrices. The optimization problem has a closed form solution that is presented below.

\begin{proposition}[\textbf{Solution of the low-rank covariance identification problem}]\label{res:singularsol}
The solutions of \Cref{prob:minsingular} for $p=1$ are the roots of a third order polynomial $at^3+bt^2+ct+d=0$ with:
\begin{displaymath}
a=4 \Tr(\dot{Y}_{A_1 \rightarrow A_2}\dot{Y}_{A_1 \rightarrow A_2}^{\T}\dot{Y}_{A_1 \rightarrow A_2}\dot{Y}_{A_1 \rightarrow A_2}^{\T}),
\end{displaymath}
\begin{displaymath}
b=12\Tr(Y_{A_1}\dot{Y}_{A_1 \rightarrow A_2}^{\T}\dot{Y}_{A_1 \rightarrow A_2}\dot{Y}_{A_1 \rightarrow A_2}^{\T}),
\end{displaymath}
\begin{displaymath}
c=4\Tr(2Y_{A_1}Y_{A_1}^{\T}\dot{Y}_{A_1 \rightarrow A_2}\dot{Y}_{A_1 \rightarrow A_2}^{\T}+Y_{A_1}\dot{Y}_{A_1 \rightarrow A_2}^{\T}Y_{A_1}\dot{Y}_{A_1 \rightarrow A_2}^{\T}-\dot{Y}_{A_1 \rightarrow A_2}\dot{Y}_{A_1 \rightarrow A_2}^{\T}Y_{\widehat{C}}Y_{\widehat{C}}^{\T}),
\end{displaymath}
\begin{displaymath}
d=4\Tr(Y_{A_1}Y_{A_1}^{\T}\dot{Y}_{A_1 \rightarrow A_2}Y_{A_1}^{\T}-\dot{Y}_{A_1 \rightarrow A_2}Y_{A_1}^{\T}Y_{\widehat{C}}Y_{\widehat{C}}^{\T}).
\end{displaymath}

%The uniqueness condition of this solution is:
%\begin{displaymath}
%18abcd-4b^{3}d+b^{2}c^{2}-4ac^{3}-27a^{2}d^{2}\leq0.
%\end{displaymath}
\end{proposition}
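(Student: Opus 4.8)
The plan is to turn \Cref{prob:minsingular} with $p=1$ into the minimization of a univariate quartic polynomial and then read off the first‑order optimality condition. Since $t\mapsto t^2$ is increasing on $[0,\infty)$, the minimizers of $d_F\big(\varphi_{A_1\to A_2}(t),\widehat C\big)$ coincide with those of $f(t)\coloneqq\big\|\varphi_{A_1\to A_2}(t)-\widehat C\big\|_F^2$. Using \cref{eq:geodesic} I would write $Y(t)\coloneqq Y_{A_1}+t\,\dot Y_{A_1\to A_2}$, so that $\varphi_{A_1\to A_2}(t)=Y(t)Y(t)^\T$, and expand
\[
  f(t)=\tr\!\big(Y(t)Y(t)^\T Y(t)Y(t)^\T\big)-2\tr\!\big(Y(t)^\T\widehat C\,Y(t)\big)+\tr(\widehat C^2),
\]
where I have used that $\widehat C$ and $\varphi_{A_1\to A_2}(t)$ are symmetric. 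By cyclicity of the trace the first term equals $\tr\big(G(t)^2\big)$ with $G(t)\coloneqq Y(t)^\T Y(t)=Y_{A_1}^\T Y_{A_1}+t\big(Y_{A_1}^\T\dot Y_{A_1\to A_2}+\dot Y_{A_1\to A_2}^\T Y_{A_1}\big)+t^2\,\dot Y_{A_1\to A_2}^\T\dot Y_{A_1\to A_2}$, a quadratic matrix polynomial in $t$; the middle term is also quadratic in $t$. Hence $f$ is a quartic polynomial in $t$, with leading coefficient $\tr\big((\dot Y_{A_1\to A_2}^\T\dot Y_{A_1\to A_2})^2\big)=\|\dot Y_{A_1\to A_2}\dot Y_{A_1\to A_2}^\T\|_F^2$, which is positive unless $\dot Y_{A_1\to A_2}=0$. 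In that generic case $f$ is coercive, so a global minimizer exists; since the domain is all of $\mathbb{R}$, any minimizer is an interior stationary point and thus satisfies $f'(t)=0$, a cubic equation. (When $\dot Y_{A_1\to A_2}=0$ the curve is constant and the statement is vacuous, all four coefficients vanishing.)

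It then remains to identify the four coefficients. Writing $G(t)=G_0+tG_1+t^2G_2$, I would expand $\tr(G(t)^2)$ through its nine products $\tr(G_iG_j)$ and expand $\tr\big(Y(t)^\T\widehat C\,Y(t)\big)=\tr(Y_{A_1}^\T\widehat C Y_{A_1})+2t\,\tr(Y_{A_1}^\T\widehat C\dot Y_{A_1\to A_2})+t^2\,\tr(\dot Y_{A_1\to A_2}^\T\widehat C\dot Y_{A_1\to A_2})$, then differentiate and collect powers of $t$. Two observations bring the answer to the displayed form. First, invariance of the trace under cyclic permutation and transposition merges pairs of terms that are transposes of one another, e.g.\ $\tr\big(\dot Y_{A_1\to A_2}^\T Y_{A_1}\dot Y_{A_1\to A_2}\dot Y_{A_1\to A_2}^\T\big)=\tr\big(Y_{A_1}\dot Y_{A_1\to A_2}^\T\dot Y_{A_1\to A_2}\dot Y_{A_1\to A_2}^\T\big)$, which yields $b$, and similarly $\tr\big((\dot Y_{A_1\to A_2}^\T\dot Y_{A_1\to A_2})^2\big)=\tr\big(\dot Y_{A_1\to A_2}\dot Y_{A_1\to A_2}^\T\dot Y_{A_1\to A_2}\dot Y_{A_1\to A_2}^\T\big)$, which yields $a$. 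Second, from $\dot Y_{A_1\to A_2}=Y_{A_2}Q^\T-Y_{A_1}$ with $Y_{A_1}^\T Y_{A_2}=HQ$ a polar decomposition one obtains $Y_{A_1}^\T\dot Y_{A_1\to A_2}=H-Y_{A_1}^\T Y_{A_1}$, which is symmetric because $H$ is the positive‑semidefinite polar factor; hence $G_1=2\,Y_{A_1}^\T\dot Y_{A_1\to A_2}$, and several cross‑terms in $\tr(G(t)^2)$ and $\tr(G_1^2)$ coincide, collapsing into the compact forms of $c$ and $d$. The sample data enters only through $\widehat C=Y_{\widehat C}Y_{\widehat C}^\T$, which is why it shows up only in the linear and constant coefficients.

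I expect the only real work to be the trace bookkeeping in the second step: one must track carefully which of the nine $\tr(G_iG_j)$ and which cross‑terms of $\tr(G_1^2)$ are equal — via cyclicity, transposition, and the symmetry of $Y_{A_1}^\T\dot Y_{A_1\to A_2}$ — in order to land exactly on the stated traces. There is no conceptual obstacle: minimizing the Frobenius distance from a curve $t\mapsto Y(t)Y(t)^\T$ that is quadratic in $t$ to a fixed matrix is a degree‑four polynomial least‑squares problem, whose normal equation is the cubic $at^3+bt^2+ct+d=0$. To produce the solution concretely one then computes the (at most three) real roots of this cubic and keeps the root of smallest $d_F$.
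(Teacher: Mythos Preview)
Your approach is correct and is exactly the paper's: replace the Frobenius distance by its square, expand $\|Y(t)Y(t)^\T-\widehat C\|_F^2$ as a quartic in $t$, and set the derivative to zero to obtain the cubic. The paper's own proof is even terser than your proposal---it states just these two steps and leaves all of the trace bookkeeping (and your observations about coercivity and the symmetry of $Y_{A_1}^\T\dot Y_{A_1\to A_2}$) to the reader.
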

\begin{proof}
The cost function is:
\begin{displaymath}
d_F\big{(}{\varphi}_{A_{1} \rightarrow A_{2}}(t),\widehat{C}\big{)}=\sqrt{\Tr\big{(}({\varphi}_{A_{1} \rightarrow A_{2}}(t)-\widehat{C})({\varphi}_{A_{1} \rightarrow A_{2}}(t)-\widehat{C})^{\T}\big{)}}.
\end{displaymath}
The third-order polynomial is obtained after setting the derivative to zero and noting that the optimization problem is unconstrained.
%\begin{displaymath}
%\Tr((Y_{A_1}Y_{A_1}^{\T}+t(Y_{A_1}\dot{Y}_{A_1 \rightarrow A_2}^{\T}+\dot{Y}_{A_1 \rightarrow A_2}Y_{A_1}^{\T})+t^{2}\dot{Y}_{A_1 \rightarrow A_2}\dot{Y}_{A_1 \rightarrow A_2}^{\T})(Y_{A_1}Y_{A_1}^{\T}+t(Y_{A_1}\dot{Y}_{A_1 \rightarrow A_2}^{\T}+\dot{Y}_{A_1 \rightarrow A_2}Y_{A_1}^{\T})+t^{2}\dot{Y}_{A_1 \rightarrow A_2}\dot{Y}_{A_1 \rightarrow A_2}^{\T})^{\T}).
%\end{displaymath}
\end{proof}

\Cref{res:singularsol} provides at least one solution. If there are three roots, the minimizer is of course the one with smallest objective. As with any third order polynomial, the uniqueness condition of this solution is:
\begin{displaymath}
18abcd-4b^{3}d+b^{2}c^{2}-4ac^{3}-27a^{2}d^{2}\leq0.
\end{displaymath}

The computational cost of finding the solution of the low-rank covariance identification problem is only $\mathcal{O}(nr)$. Indeed, roots of the cubic equation have a closed form expression whose evaluation does not require any meaningful cost. The only computational cost is that associated with computing traces to obtain the polynomial coefficients. By virtue of the cyclic property of the trace, we can compute these traces with $\mathcal{O}(nr)$ elementary operations.
% \Cref{sec:cs_onep} illustrates the minimization to a one-parameter covariance function.

%The following result summarizes the properties and solution of Problem $\ref{prob:minsingular}$.
%
%\begin{result}[\textbf{Differential geometric solution of low-rank covariance estimation}]\label{res:solution_singular}
%Problem \ref{prob:minsingular} has at least one real solution. Its analytical solution can be expressed as in \Cref{res:singularsol}. If the condition in \Cref{lem:singularunique} is satisfied, the problem has a unique real solution. Moreover, the computational cost of finding the solution is only $\mathcal{O}(nr)$.
%\end{result}
%\begin{proof}
%Refer to \Cref{res:singularsol,lem:singularunique,lem:compcost}.
%\end{proof}

\subsection{Two-parameter first-order covariance functions}
Here we focus on~\Cref{prob:minsingular} in the two-parameter case ($p=2$) for first-order covariance functions, i.e.,  $\varphi = \varphi^{\mathrm{LS}}$ or $\varphi = \varphi^{\mathrm{LG}}$ (cf.~\Cref{def:1stOrderCovFunctions}).
%
%Instead of minimizing the distance, we minimize the square distance, i.e., we solve the optimization problem:
%\begin{problem}[\textbf{Covariance estimation on two-parameter covariance function}] \label{prob:2var}
%Given a covariance function $\varphi : [0,N_1] \times [0,N_2] \to \psd{k}{n}$, with $k \leq r$, and $\widehat{C} \in \psd{r}{n}$. Compute
% \[  \argmin_{t_1, t_2} d_F ( \varphi(t_1,t_2), \widehat{C})^2. \]
%\end{problem}
Similarly to the previous sections, we assume that data matrices are defined on a grid of points $(A_{i,j})$, with $A_{i,j} \in \psd{r,n}$, $i = 0, \dots, N_1$ and $j = 0, \dots, N_2$.

\subsubsection{First-order sectional covariance function} \label{sec:optiLinSec}
In the case $\varphi = \varphi^{\mathrm{LS}}$ (see~\Cref{def:1stOrderCovFunctions}), we propose to use a gradient descent on each patch of the surface. Observe indeed that the surface $\varphi^{\mathrm{LS}}$ is generally nondifferentiable (actually, even noncontinuous) on the borders of the patches. The global optimum is then computed as the minimum of the optima obtained on the patches.
Let
\begin{equation}
    f : [0, N_1] \times [0,N_2] \to \R : (t_1,t_2) \mapsto f(t_1,t_2)\coloneqq  d_F(\varphi^{\mathrm{LS}}(t_1,t_2), \widehat{C})^2
    \label{eq:costFunctionLS}
\end{equation}
be the cost function to minimize.

Consider an arbitrary patch $(l,m)$, with  $l = 0, \dots, N_1 - 1$ and $m = 0, \dots, N_2 - 1$. Let $f^{l,m}$ be the restriction of $f$ to the patch $(l,m)$, and let $A_{l,m}$, $A_{l+1,m}$, $A_{l,m+1}$, $A_{l+1,m+1}$ denote the four corners of the patch $(l,m)$, and $\bar Y_{A_{l,m}}$, $\bar Y_{A_{l+1,m}}$, $\bar Y_{A_{l,m+1}}$, $\bar Y_{A_{l+1,m+1}}$ their projection on the section. We omit the superscript ${\mathrm{LS}}$ in the remainder of this section. The gradient of the restriction of the cost function~\eqref{eq:costFunctionLS} to that patch can be computed explicitly:
\begin{align}
\frac{\partial f^{l,m}}{\partial t_1}(t_1,t_2)= 2 \Tr{D \varphi(t_1, t_2) [e_1] (\varphi(t_1, t_2) - \widehat{C})^\T  } ,\\
\frac{\partial f^{l,m}}{ \partial t_2 }(t_1,t_2) = 2 \Tr{D \varphi(t_1, t_2) [e_2] (\varphi(t_1, t_2) - \widehat{C})^\T  },
\end{align}
with $D \varphi(t_1, t_2) [e_1]$ and $D \varphi(t_1, t_2) [e_2]$ defined as:
\begin{align*}
D \varphi(t_1, t_2) [e_1] &=  DY_{\varphi}(t_1, t_2)[e_1] Y_{\varphi}^\T + Y_{ \varphi} DY_{\varphi}(t_1, t_2)[e_1]^\T, \\
D \varphi(t_1, t_2) [e_2] &=  DY_{\varphi}(t_1, t_2)[e_2] Y_{\varphi}^\T + Y_{ \varphi} DY_{\varphi}(t_1, t_2)[e_2]^\T.
\end{align*}
The factor $Y_{\varphi}$ was defined in~\Cref{def:sectional}:
\[ Y_{\varphi} = \bar{Y}_{A_{l,m}}(1-t_1)(1-t_2)+\bar{Y}_{A_{l+1,m}}(1-t_1)t_2+\bar{Y}_{A_{l,m+1}}t_1(1-t_2)+\bar{Y}_{A_{l+1,m+1}}t_1t_2. \]
So $DY_{\varphi}(t_1, t_2)[e_1]$ and $DY_{\varphi}(t_1, t_2)[e_2]$ can be easily obtained as:
\begin{align*}
DY_{\varphi}(t_1, t_2)[e_1] &= t_2 (\bar Y_{A_{l,m}} - \bar Y_{A_{l+1,m}} - \bar Y_{A_{l,m+1}} + \bar Y_{A_{l+1,m+1}}) + \bar Y_{A_{l,m+1}} - \bar Y_{A_{l,m}}, \\
DY_{\varphi}(t_1, t_2)[e_2] &= t_1 (\bar Y_{A_{l,m}} - \bar Y_{A_{l+1,m}} - \bar Y_{A_{l,m+1}} + \bar Y_{A_{l+1,m+1}}) + \bar Y_{A_{l+1,m}} - \bar Y_{A_{l,m}},
\end{align*}
with the only difference being the parameter $t_1$ vs.\ $t_2$.

\subsubsection{First-order geodesic covariance function}\label{sec:variablemin}

%Given four data matrices $A_1, \dots, A_4$, corresponding to the parameter values $(t_1, t_2)$ equal to $(0,0)$, $(0,1)$, $(1,0)$, $(1,1)$. We define a function based on these four data matrices, by first moving in the direction of parameter $t_1$, then in the direction of $t_2$, as depicted on Figure~\ref{fig:configPoints}. Notice that this is an special case of a balanced two-parameter covariance function. In particular, it is the same than in \Cref{eq:unbalanced} turning parameter $t_2$ equal to $t_1$.

We focus now on \Cref{prob:minsingular} for $p=2$, when the covariance function is the surface $\varphi^{\mathrm{LG}}$ defined in~\Cref{def:1stOrderCovFunctions}. Let
\begin{equation}
    f : [0, N_1] \times [0,N_2] \to \R : (t_1,t_2) \mapsto f(t_1,t_2)\coloneqq  d_F(\varphi^{\mathrm{LG}}(t_1,t_2), \widehat{C})^2
    \label{eq:costFunctionLG}
\end{equation}
be the cost function. The surface  $\varphi^{\mathrm{LG}}$ is generally not differentiable on the borders of the patches. As a result, similarly to the previous section, we propose to run an optimization algorithm to find the optimum on each patch, and to compare the optimal values obtained on the patches to obtain the global optimum.

Let $f^{l,m}$ be the restriction of $f$ to the patch $(l,m)$. We propose to minimize $f^{l,m}$ by expressing it as a one-variable function, replacing $t_2$ by its optimal value:
\begin{equation}
t_2^*(t_1) = \argmin_{t_2 \in \R} f^{l,m}(t_1,t_2),
\label{eq:t2opti}
\end{equation}
and then to apply gradient descent to the problem:
\begin{equation}
\min_{t_1 \in \R} \tilde f^{l,m}(t_1) \coloneqq f^{l,m}(t_1,t_2^*(t_1)).
\end{equation}
The computation of the partial derivatives required by both steps is deferred to \Cref{ap:variableprojection}.

\subsection{Higher-order covariance functions using B\'ezier curves}
We now solve~\Cref{prob:minsingular} for $p=2$ when the surface is defined from B\'ezier interpolating surfaces.
\subsubsection{Higher-order  sectional  covariance  function}
To solve~\Cref{prob:minsingular} for $p=2$ when the surface is a Euclidean B\'ezier surface built in a given section of the manifold, we propose again to use steepest descent. The cost function
\begin{equation}
    f : [0, N_1] \times [0,N_2] \to \R : (t_1, t_2) \mapsto f(t_1,t_2) \coloneqq d_F(\varphi^{\mathrm{BS}}(t_1,t_2),\widehat{C})^2,
\end{equation}
with $\varphi^{\mathrm{BS}}$ defined in~\Cref{def:BezierCovFunctionSection}, is $\mathcal{C}^1$. Moreover, since B\'ezier curves in the Euclidean space are weighted sums of Bernstein polynomials, the gradient can be computed explicitly. The computation of the gradient is similar to~\Cref{sec:optiLinSec}, except that now $Y_{\varphi}$ is obtained as a linear combination of cubic Bernstein polynomials:
\[ Y_{\varphi}(t_1,t_2) = \sum_{i = 0}^3 \sum_{j = 0}^3 b_{ij} B_{i3}(t_1) B_{j3}(t_2).   \]
The derivatives $DY_{\varphi}(t_1, t_2)[e_1]$ and $DY_{\varphi}(t_1, t_2)[e_2]$ become:
\begin{align*}
DY_{\varphi}(t_1,t_2)[e_1] &= \sum_{i = 0}^3 \sum_{j = 0}^3 b_{ij}  \dot{B}_{i3}(t_1) B_{j3}(t_2), \\
DY_{\varphi}(t_1,t_2)[e_2] &= \sum_{i = 0}^3 \sum_{j = 0}^3 b_{ij} B_{i3}(t_1)  \dot{B}_{j3}(t_2).
\end{align*}

\subsubsection{Higher-order covariance function based on the exponential and logarithm maps}

For $\varphi^{\mathrm{BG}}$ in~\Cref{def:BezierCovFunctionManifold}, it remains unclear whether the gradient of the cost function:
\begin{equation}
    f : [0, N_1] \times [0,N_2] \to \R : (t_1,t_2) \mapsto d_F(\varphi^{\mathrm{BG}}(t_1,t_2), \widehat{C})^2.
\end{equation}
has an analytical expression. Variable projection methods also do not seem applicable in this case. Thus we have to estimate the gradient numerically, resorting to finite differences.

\section{Case study: wind field approximation}\label{sec:app}

Given the increasing popularity of unmanned aerial vehicles (UAVs) in transportation, surveillance, agriculture, and beyond, accurate and safe aerial navigation is essential.
% Both regulators and users demand safety, reliability, and efficiency in the deployment of UAVs.
Achieving these requirements demands expressive models of the UAV's environment---in particular, the wind field---and the ability to update these models given new observations, e.g., via Kalman filtering \cite{palanthandalam2008wind,doekemeijer2017ensemble}. To this end, we wish to construct and estimate the covariance of spatially distributed wind velocity components.

\subsection{Model problem and data set}

Gaussian random field (GRF) models have previously been used to describe wind velocities (e.g., \cite{yang2017real,lawrance2010simultaneous}). A common practice in this setting is to define the covariance matrix of the velocities using the (smooth) squared-exponential kernel, perhaps with some modifications to allow for non-stationarity \cite{lawrance2011path}.
We instead assume to have instances of the covariance matrix for different values of the prevailing wind heading $\theta$ and magnitude $W$; from these instances, we will build a covariance family for continuous $(\theta,W)$. The wind field can change dramatically as function of the prevailing wind, and thus it is useful to consider a covariance family built from a variety of representative prevailing wind settings.

In general, these instances could be estimated from observational data, or they could be constructed using offline (and potentially expensive) computational fluid dynamics simulations. Here we use the latter: we solve the unsteady incompressible Navier--Stokes equations on the two-dimensional domain shown in \Cref{fig:snaps}, using direct numerical simulation with a spectral element method. The Reynolds number in our simulations, defined according to the side-length of the central obstacle, is around 500 for $W=7.0$. For each chosen value of $(\theta, W)$, we run the simulation until any transients due to the initial condition have dissipated and then collect instantaneous velocity fields as ``snapshots,'' shown in \Cref{fig:snaps}. The sample covariance of these snapshots provides the data covariance matrix at that $(\theta, W)$.

The right plot of \Cref{fig:casestudy2} represents a notional idea of our example domain: flow around a rectangular cuboid in three dimensions. We consider only a horizontal ``slice'' of this domain, e.g., the wind in a plane at height $h$ sufficiently far from the ground and from top of the obstacle so that a two-dimensional approximation is reasonable. The left plot of \Cref{fig:casestudy2} shows the mean value of the velocity on this plane, at an example value of $(\theta, W)$. The grid size is 39 $\times$ 39, and hence the discretized wind field has $n=3024 = 2 \times (39^2 - 9)$ degrees of freedom: two velocity components at each grid point, subtracting 9 points for the obstacle.  The grey contours represent the pointwise variance of the $x$-velocity plus that of the $y$-velocity (i.e., the sum of two diagonal entries of the covariance matrix, at each point in space). Naturally, the variance is larger downstream of the obstacle, where vortices are shed.

Our data set for the examples below comprises a set of covariance matrices $C(\theta_k,W_i)$, with $\theta_k=(k-1)\pi/64$, $k\in{1,\dots,32}$ and $W\in \{4.0,5.5,7.0,8.5,10.0,11.5,13.0\}$, as illustrated in \Cref{fig:data}. Using a truncated singular value decomposition of each matrix, we reduce the rank to $r=20$. These covariance matrices then belong to $\psd{20}{3024}$.

\begin{figure}[h!]
\centering
  \includegraphics[scale=0.48]{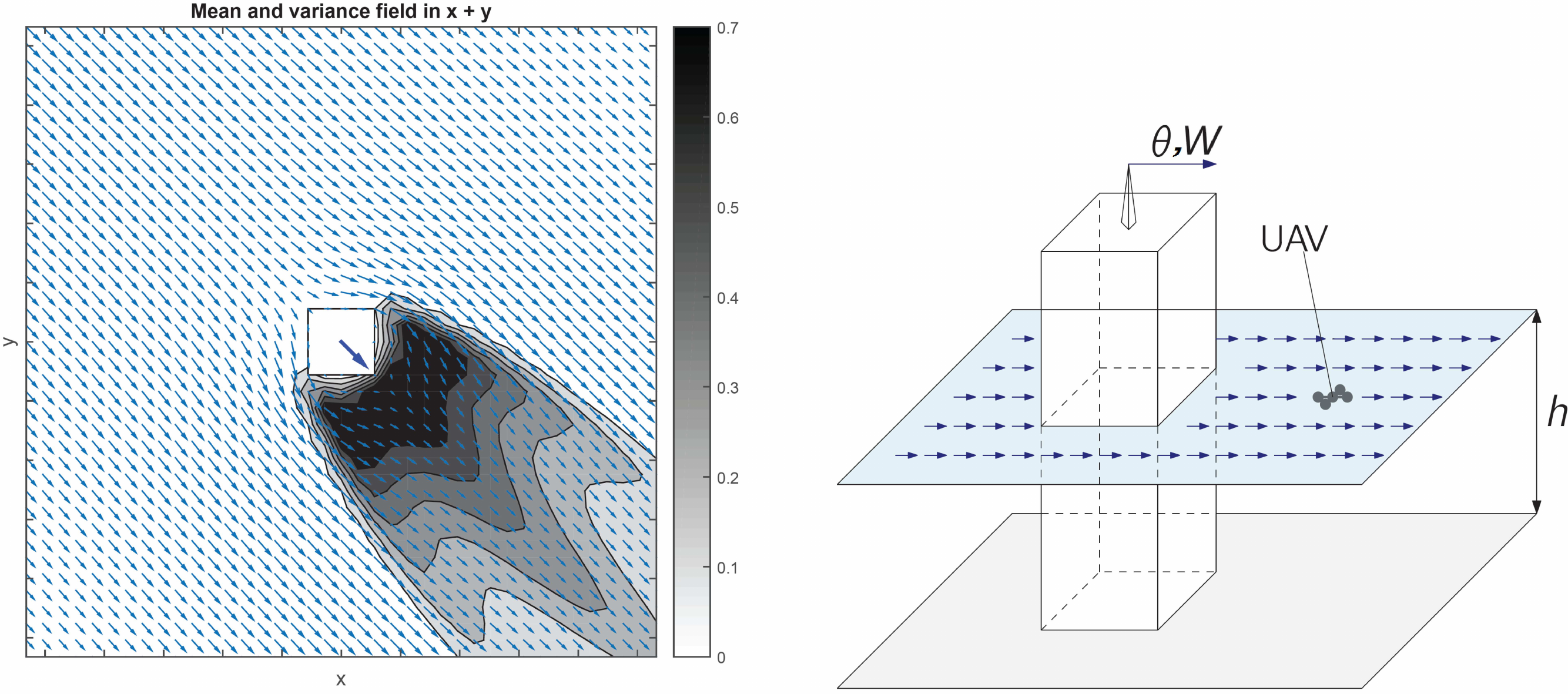}
  \caption{Representation of the wind field. Left: two-dimensional domain, with wind field around the square obstacle represented by light blue arrows and the prevailing wind in dark blue; gray contours are the variance field. Right: notional 3-D problem, with a section of the wind field at an altitude $h$.}
  \label{fig:casestudy2}
\end{figure}

\begin{figure}[h!]
\begin{multicols}{2}
\includegraphics[scale=0.47]{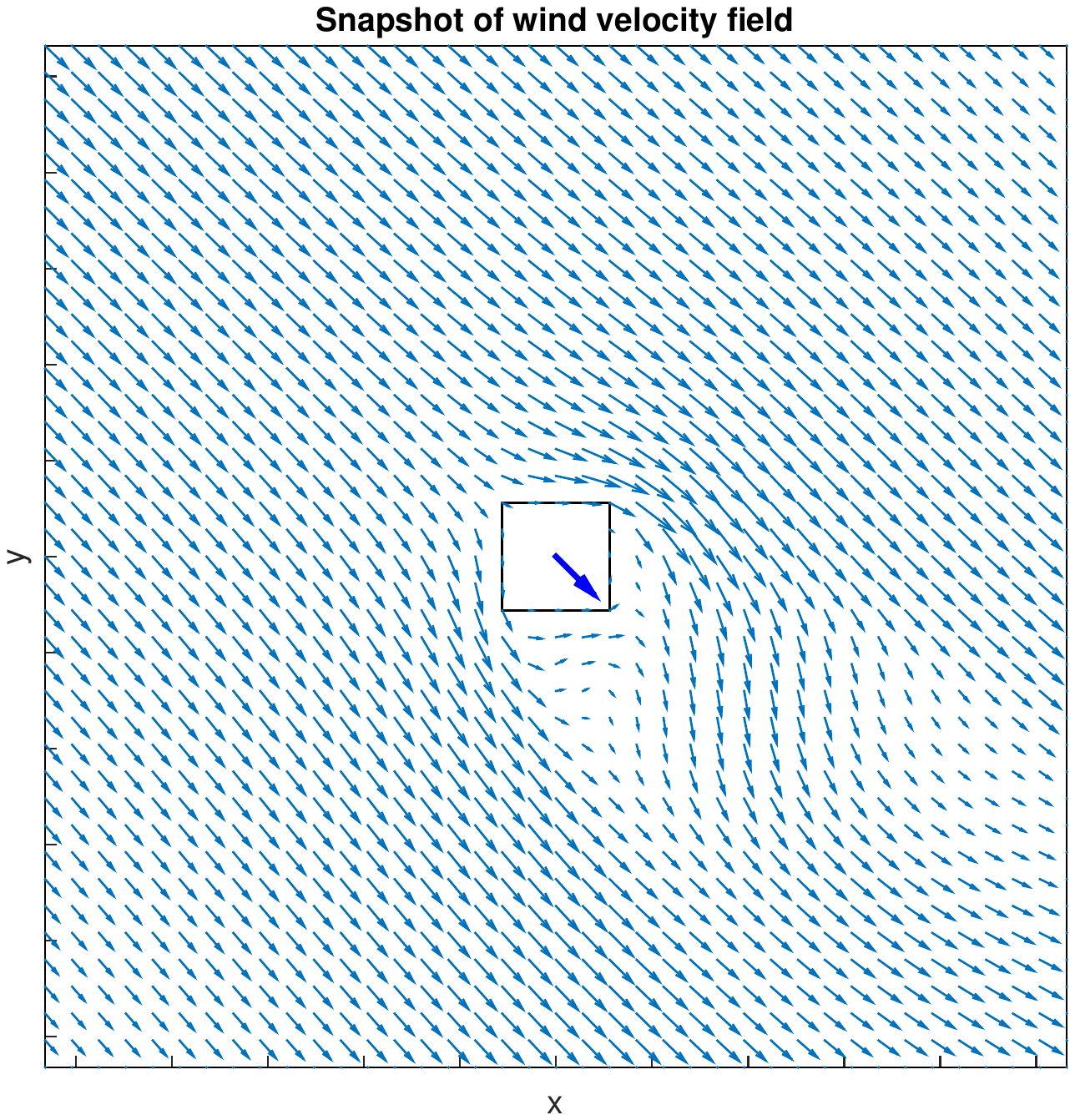}\par
\includegraphics[scale=0.47]{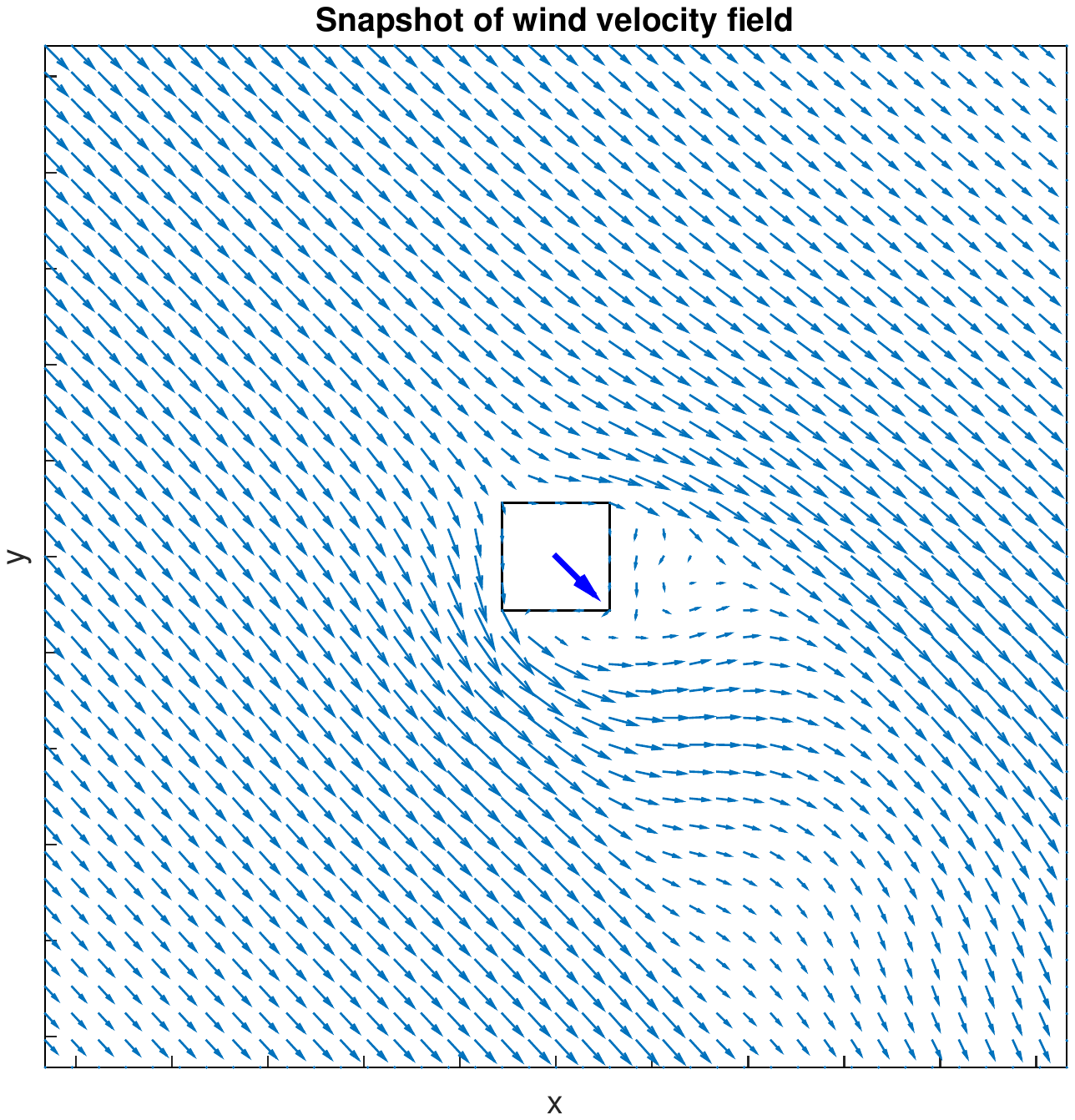}\par
    \end{multicols}
\begin{multicols}{2}
\includegraphics[scale=0.47]{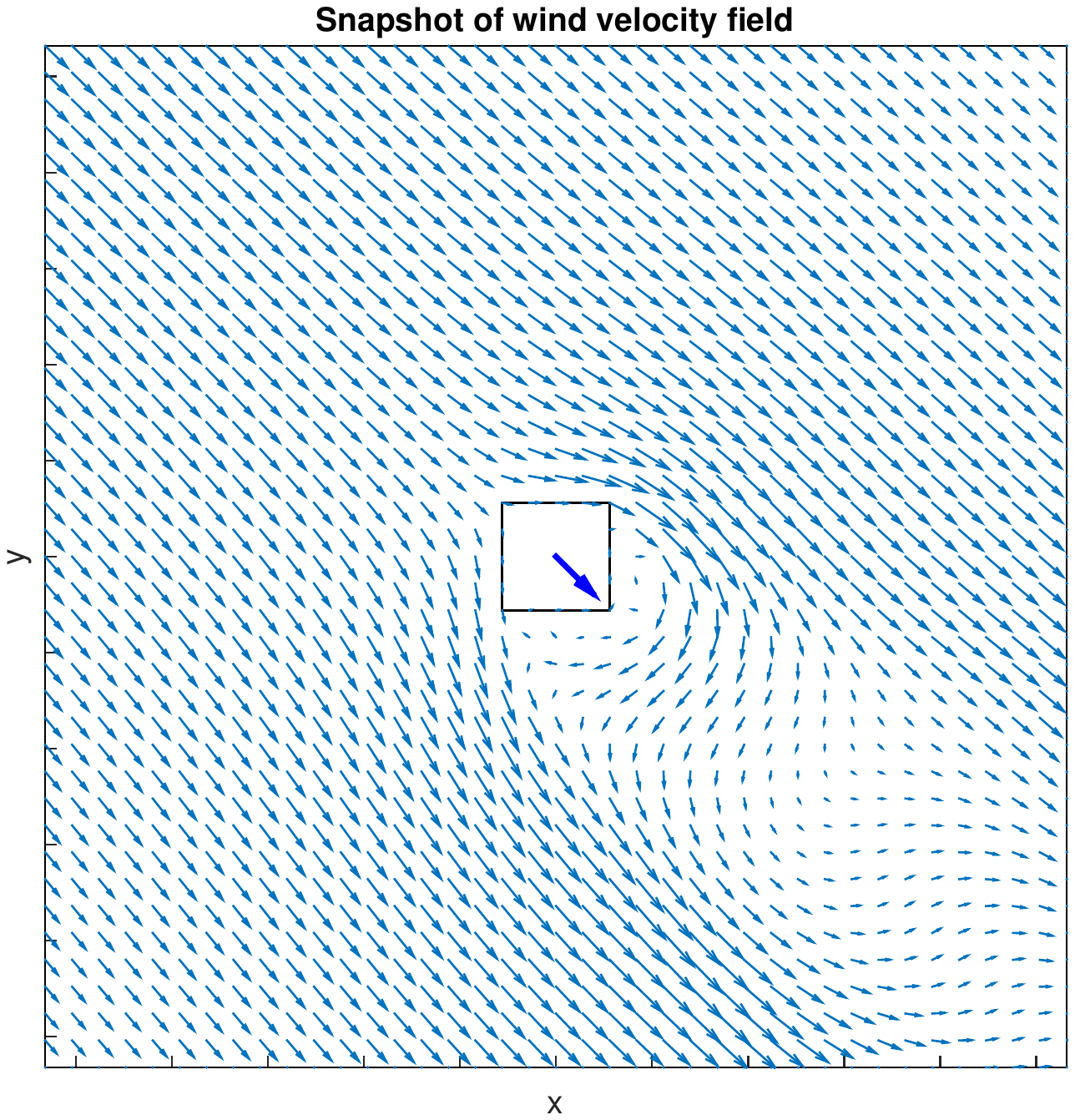}\par
\includegraphics[scale=0.47]{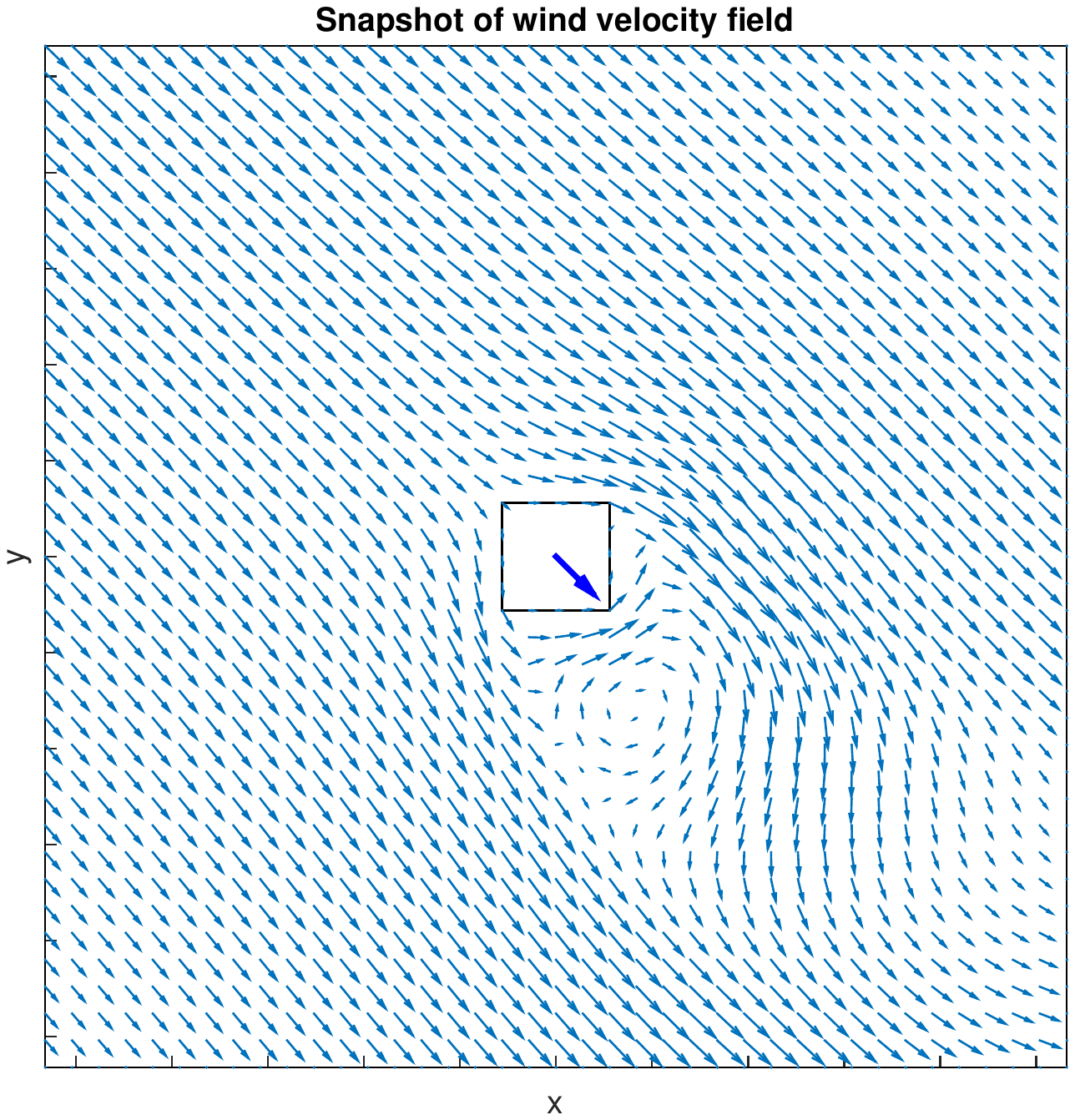}\par
\end{multicols}
  \caption{Instantaneous snapshots of the wind velocity field for $\theta=45$ and $W=7$.}
  \label{fig:snaps}
\end{figure}

\subsection{One-parameter covariance families}
\label{sec:cs_interpol}

We first consider interpolation and identification with a one-parameter geodesic covariance function $\varphi_{A_{1} \rightarrow A_{2}}(t)$, where the data matrices $A_1$ and $A_2$ are obtained at the same wind magnitude but at faraway headings: $A_1 = C(\theta_{1}=0, W=8.5)$ and $A_2 = C(\theta_{9}=23, W=8.5)$. (As noted in \Cref{rem:coincide}, the one-parameter geodesic and sectional families coincide.) To understand the relationship between the geodesic parameter $t$ and the true wind heading, we minimize the distance from each of the seven intermediate data matrices $C(\theta_{k}, 8.5)$, $k\in{1,2,\dots,9}$, to this covariance family (cf.\ red line in \Cref{fig:data}) and obtain a value of $t_k$. \Cref{fig:tvstheta} shows the resulting pairs $(\theta_k,t_k)$. The relationship between $t$ and $\theta$ is monotone and nearly linear. Similar results can be obtained for other choices of $W$.

\begin{figure}[h!]
\centering
  \includegraphics[scale=0.55]{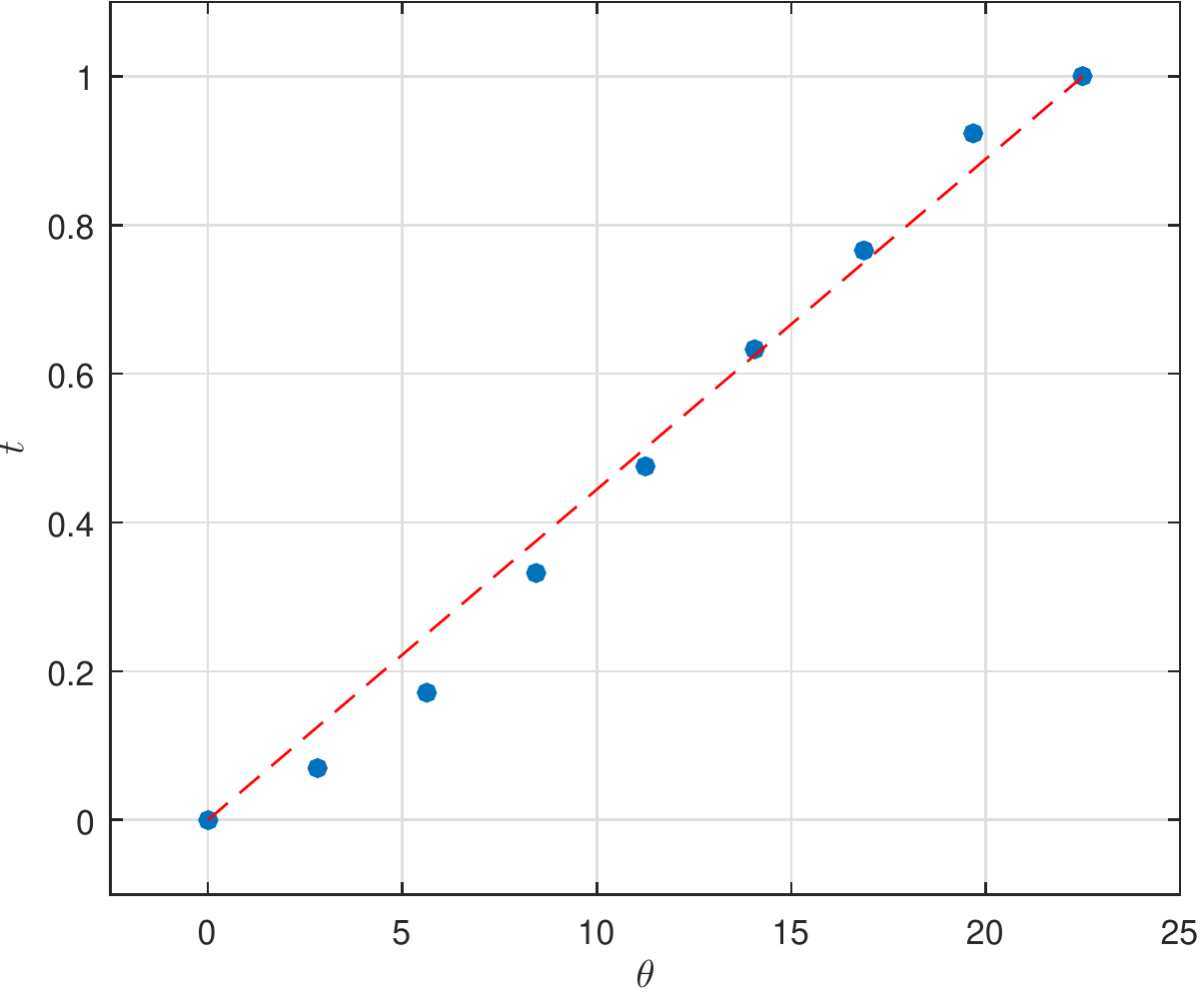}
  \caption{Minimizing value of $t$ (blue points) for data drawn from a range of wind headings $\theta$, for $W=8.5$ (cf.\ \Cref{sec:cs_interpol}). The red line represents a ``perfect'' linear relationship.}
  \label{fig:tvstheta}
\end{figure}

Next we focus on the shape of the objective function used in distance minimization for one-dimensional covariance families. We build a one-parameter covariance function $\varphi_{A_{1} \rightarrow A_{2}}(t)$ with $A_1=C(16.9,7)$ and $A_2=C(22.5,7)$ and evaluate the distance to $A_3=C(19.7,7)$ as a function of $t \in [0,1]$. (See \Cref{fig:data}, dashed blue line, to identify the relevant matrices in our data set.) This exercise is shown in \Cref{fig:casestudy}, where the anchor or data matrices $A_1, A_2$ are illustrated via inset plots with a green obstacle. (The matrices are visualized by their variance fields, as in \Cref{fig:casestudy2} (left).) First, we note that the distance objective is smooth and convex (on $[0,1]$), and that its minimum (marked with a blue dot) is close to, though not exactly, $t=0.5$. This offset is a further instance of the difference illustrated in \Cref{fig:tvstheta}, between the minimum-distance points and a perfect linear relationship. The inset plots in \Cref{fig:casestudy} with white obstacles show covariances in the one-dimensional family at intermediate values of $t$; we see that these covariances look physically reasonable, suggesting intermediate wind headings as desired. Nonetheless, we also note that the minimum Frobenius distance from $A_3$ to this family is roughly 14, about half of the distance from $A_3$ to the anchor $A_1$. For a more accurate representation of $A_3$, one may thus want a richer family or more representative data matrices. We will explore these choices below.

\begin{figure}[h!]
\centering
  \includegraphics[scale=0.48]{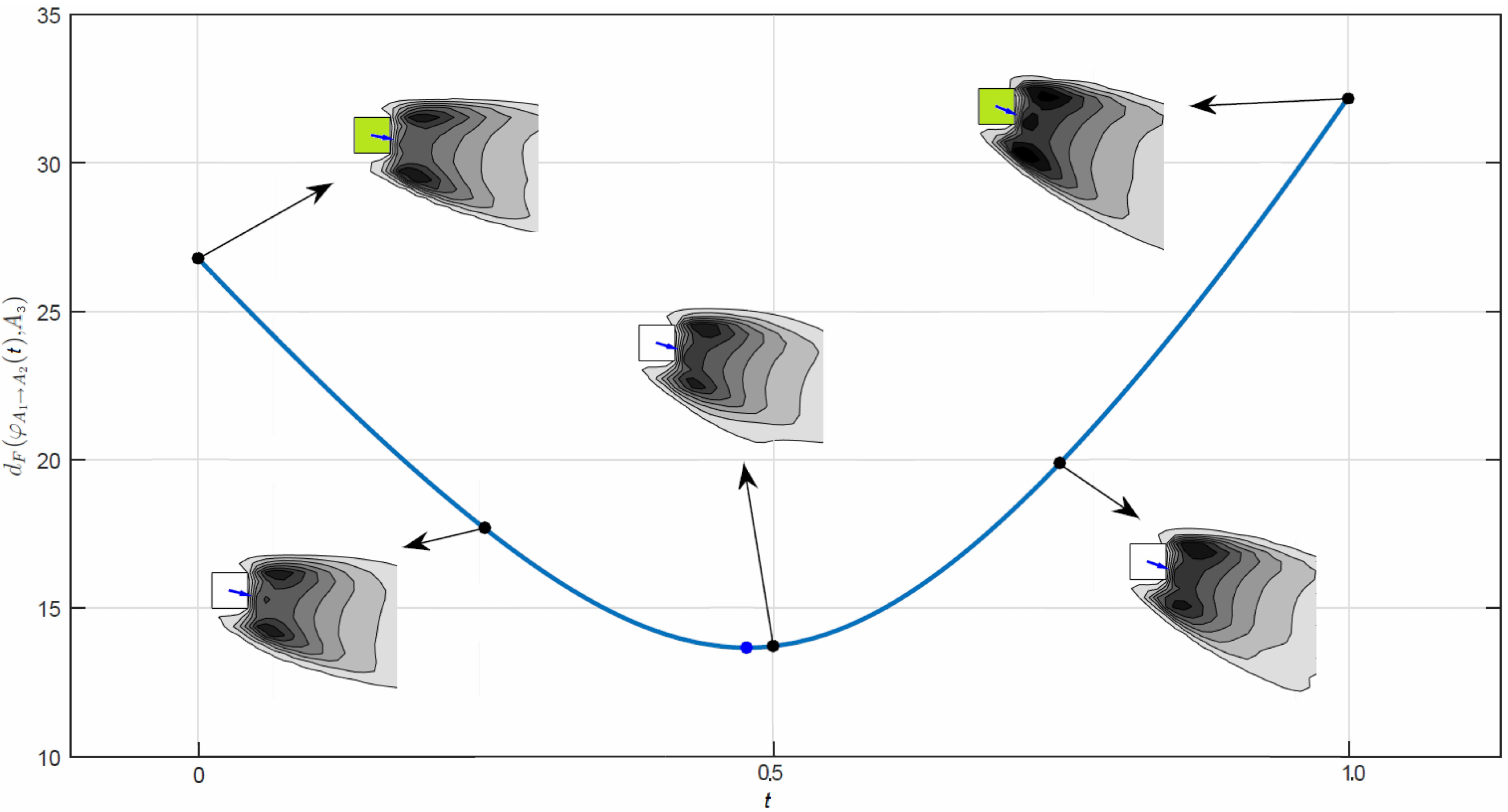}
  \caption{Distance from $A_3 = C(19.7,7)$ to the one-parameter covariance family built from $A_1 = C(16.9,7)$ and $A_2 = C(22.5,7)$, as a function of the input variable $t$. $A_{1}$ and $A_{2}$ are marked with green obstacles to identify them as the data matrices/anchors. The blue point represents the distance minimizer.}
  \label{fig:casestudy}
\end{figure}

% \subsection{Two-parameter covariance families}

% \subsubsection{One-parameter covariance function}\label{sec:cs_onep}

%\subsubsection{Unbalanced two-parameter covariance function}
%In this section, we show a solution example of the problem \Cref{prob:minsingular} using the covariance function represented in the scheme in \Cref{fig:schemeunbalanced}. Once the scheme is selected, we show the distance from $\widehat{C}$ to ${\varphi}_{A_{1} \rightarrow A_{2}\rightarrow A_{3}}(t_1,t_2)$ in \Cref{fig:unbalancedresult}. The green dot represents the solution using the heuristic (distance=$12.2$). Absolute minimum happens in $t_1=1.97$ and $t_2=0.62$ with a distance of $6.5$ from target. As a reference, $d_F(Id,A_i)=50$ to $90$.
%
%\begin{figure}[h!]
%\centering
%     \includegraphics[scale=.7]{Figures/scheme.pdf}
%     \caption{Scheme of a two-parameter unbalanced covariance function.}\label{fig:schemeunbalanced}
%\end{figure}
%\begin{figure}[h!]
%\centering
%    \includegraphics[scale=.5]{figures/unbalanced_isolines}
%     \caption{Distance from $\widehat{C}$ to ${\varphi}_{A_{1} \rightarrow A_{2}\rightarrow A_{3}}(t_1,t_2)$.}\label{fig:unbalancedresult}
%\end{figure}

\subsection{Distances to two-parameter covariance families}\label{sec:cs_twop}

% *This is about *illustrating* the shape of the cost function.* Rectangle in Figure 8. A5 is middle of the rectangle. $t$ range is extended. Section is based on $A_1$.

Now we illustrate the distance between a given matrix and two different two-parameter covariance families, each constructed from the same four data matrices. The minimizer of this distance is a solution of  \Cref{prob:minsingular}.

First, we consider the first-order sectional covariance function of \Cref{sec:bilinear}. We use four data matrices: $A_1=C(11.3,4)$, $A_2=C(11.3,10)$, $A_3=C(16.9,4)$, $A_4=C(16.9,10)$. \Cref{fig:2pdistanceresult} (left) illustrates the distance from ${\varphi^{\mathrm{LS}}}_{(A_{1} \rightarrow A_{2})\rightarrow (A_{3}\rightarrow A_{4})}(t_1,t_2)$ to $A_5=C(14.1,7)$. The red triangle represents distance minimizer, which lies at $t_1=0.48$ and $t_2=0.77$ and yields a distance of $5.5$ from the target. To define the section in this case, we use the matrix $A_1$. The four anchor matrices are the edges of the rectangle in \Cref{fig:data}; other choices would lead to similar results, as analyzed in subsequent subsections.

Next, we repeat the study for the geodesic two-parameter covariance function defined in \Cref{sec:pgeodesic}, with results shown in \Cref{fig:2pdistanceresult} (right). Again, the minimizer is marked with a red triangle, which lies at $t_1=0.77$ and $t_2=0.48$ and yields a distance of $5.5$ from the target. Note that the inputs to both covariance functions can in principle be any element of $\mathbb{R}^2$; here, both figures show the distance for $(t_1, t_2) \in [0,2]\times [-1,1]$. The distance contours are shaped slightly differently for the sectional and geodesic cases, though the minimizer lies in the top left quadrant $[0,1]^2$ of each figure, as expected.

%\begin{figure}[h!]
%\centering
%    \includegraphics[scale=.7]{Figures/scheme22.pdf}
%     \caption{Scheme of a two-parameter sectional covariance function.}\label{fig:schemesectional}
%\end{figure}

\begin{figure}[h!]
    \includegraphics[scale=.53]{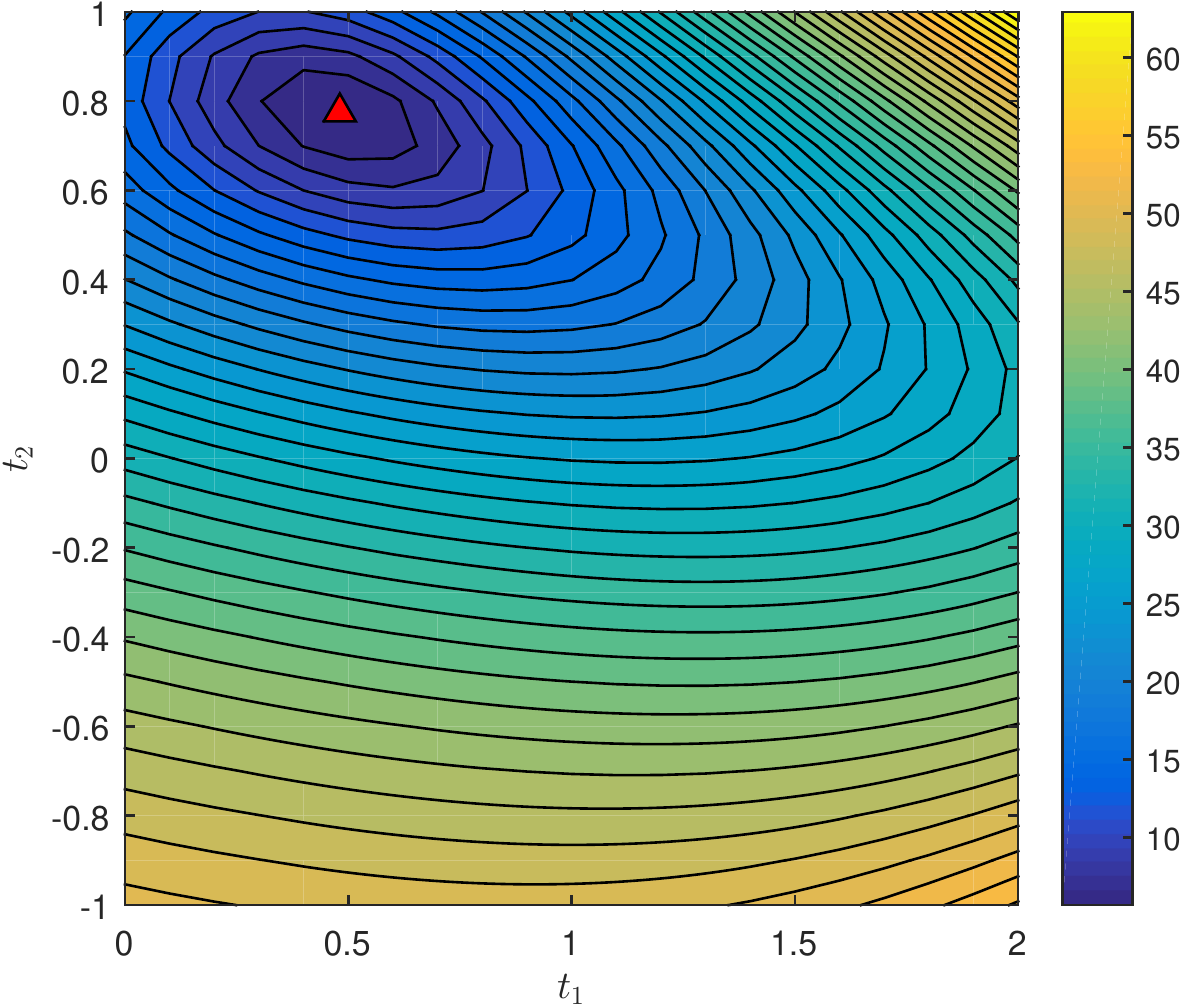}
    \includegraphics[scale=.53]{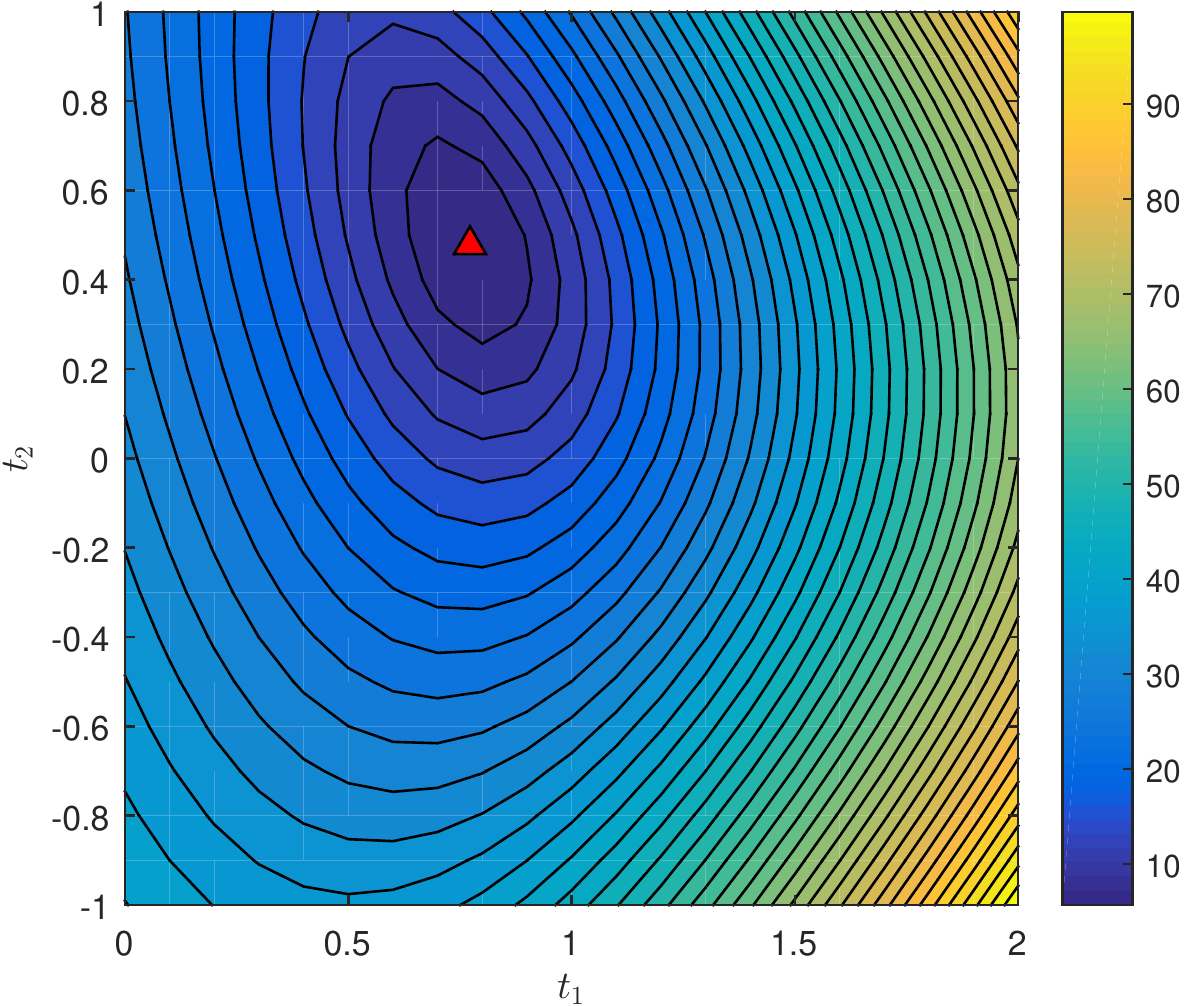}
\caption{Left: Distance from $A_5$ to ${\varphi^{\mathrm{LS}}}_{(A_{1} \rightarrow A_{2})\rightarrow (A_{3}\rightarrow A_{4})}(t_1,t_2)$. Right: Distance from $A_5$ to ${\varphi^{\mathrm{LG}}}_{(A_{1} \rightarrow A_{2})\rightarrow (A_{3}\rightarrow A_{4})}(t_1,t_2)$.}
\label{fig:2pdistanceresult}
\end{figure}

\subsection{Benchmarking first-order and higher-order covariance functions}

% core idea: looking at the various choices of methods now more systematically/quantitatively

Now we consider the four surfaces defined in~\Cref{sec:surfaces}: the first-order covariance functions $\varphi^{\mathrm{LS}}$ and $\varphi^{\mathrm{LG}}$ defined patchwise (see Definition~\ref{def:1stOrderCovFunctions}), and the B\'ezier-like covariance functions $\varphi^{\mathrm{BS}}$ and  $\varphi^{\mathrm{BG}}$ (see Definitions~\ref{def:BezierCovFunctionSection} and~\ref{def:BezierCovFunctionManifold}).

For the surfaces defined on a section of the manifold, we consider several possibilities: for $\varphi^{\mathrm{LS}}$, the section based at one of the data matrices (here, the lower left data matrix of the patch), based at the arithmetic mean of the data matrices, or based at the inductive mean of the four data matrices of the patch. For $\varphi^{\mathrm{BS}}$, the section is based at one of the data matrices (here, the lower left data matrix of the training set), at the arithmetic mean of the data matrices of the training set, or at the inductive mean of the four data matrices of the training set.

These combinations lead to a total of eight surfaces. The data matrices are split into two sets shown in \Cref{fig:data}: the blue points and the black points.  The blue points are used to construct the surface, and the accuracy of the methods is evaluated on the black points.

 \begin{figure}[h!]
 \centering
 \includegraphics[scale=1]{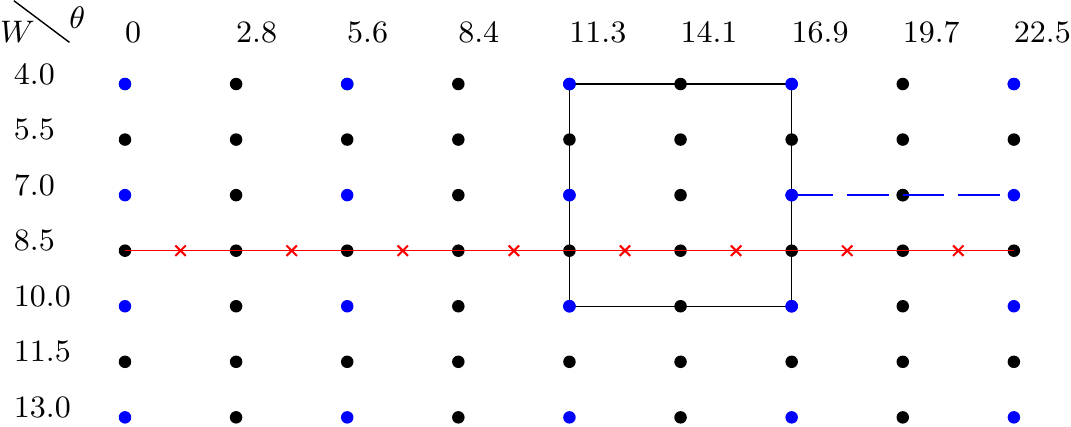}
 \caption{Data set for the wind field problem; each dot shows the wind magnitude $W$ and heading $\theta$ corresponding to a data covariance matrix. The red (crossed) and blue (dashed) lines represent the data used in \Cref{sec:cs_interpol}. The rectangle illustrates the operating zone of \Cref{sec:cs_twop}. The blue nodes comprise the training set and the black nodes the test set for \Cref{sec:interpol_2p,sec:compression}.}
 \label{fig:data}
 \end{figure}

\subsubsection{Interpolation errors}\label{sec:interpol_2p}
% Evaluating a surface at some parameter values provide us some estimation of the value of the covariance matrix associated with these parameters.
The error
\begin{equation*} E \left ( C(\theta, W) \right ) \coloneqq \left \|  C(\theta, W) - \varphi^{\text{method}}(\theta, W) \right \|^2_F \end{equation*}
is a measure of the ability of the surface $\varphi^{\text{method}}$ to recover some hidden covariance matrix $C(\theta, W)$.\footnote{Here we write $\varphi^{\text{method}}$ with arguments $(\theta, W)$ in a slight abuse of notation. More precisely, we mean that the two-parameter covariance function $\varphi^{\text{method}}$ is evaluated at $(t_1, t_2)$ corresponding to an affine mapping from the range of $(\theta, W)$ (here $[0,22.5] \times [4, 13]$) to $[0,4] \times [0,3]$, consistent with the $5 \times 4$ grid of data matrices.}
We will also consider the normalized error
\[ E_{{N}} \left ( C(\theta, W) \right ) \coloneqq 100 \times \frac{ \left \| C(\theta, W) - \varphi^{\text{method}}(\theta, W)\right  \|^2_F}{\frac{1}{4} \sum_{j = 1}^4 \left \|  C(\theta,W) - A_j \right \|^2_F}, \]
where normalization is performed with respect to the average squared distance between the target matrix $C(\theta, W)$ and the four corners of the patch to which it belongs, according to the grid representation in~\Cref{fig:data}.  (The patch is chosen systematically as the one below and to the left of the test point considered.)

Having defined these interpolation errors for arbitrary $C(\theta, W)$, we now evaluate them for all the points  $C(\theta_i, W_i)$ in the test set---i.e., for each of the data matrices marked with black nodes in \Cref{fig:data}. We average the errors over the test set and report the resulting values in~\Cref{table:interpresultscomparison}.

%\begin{table}[!h]
%\centering
%\begin{tabular}{|p{1cm}|p{1.5cm}|p{1.5cm}|p{1.5cm}|p{1.5cm}|p{1.5cm}|p{1.5cm}|p{1.5cm}|}
%\hline
 % & Patches - Section - $\varphi^{\mathrm{LS}}$ & Patches - Geod $\varphi^{LG}$ & B\'ezier - Section $\varphi^{\mathrm{BS}}$  & B\'ezier - Geod $\varphi^{\mathrm{BG}} $ & Full - Section $\varphi^{\mathrm{LS}}_{Global}$ & Full - Geod  $\varphi^{LG}_{Global}$ & Constant patches  \\
%\hline
%Test 1 & 5.0005  &  5.0145 &   5.9066  &  4.3730  & 27.1328 &  24.9232 & 11.0334 \\
%Test 2 & 7.2164 & 7.2159  &  6.4060  &  6.3875 &  23.4516  & 23.3201 & 14.6453\\
%\hline
%\end{tabular}
%\caption{Average interpolations errors obtained for the different covariance functions.}
%\label{table:interpresultscomparison}
%\end{table}

\begin{table}[!h]
\centering
\begin{tabular}{|p{4cm}|p{3cm}|p{3.3cm}|}
\hline
 & $\text{avg}_i \left [ E \left (C(\theta_i,W_i)\right ) \right ]$ & $\text{avg}_i \left [ E_{N} \left (C(\theta_i,W_i)\right ) \right ]$ \\
 \hline
1st-order section $\varphi^{\mathrm{LS}}_{\mathrm{one}}$  & 30.3 & 7.78\\
1st-order section $\varphi^{\mathrm{LS}}_{\mathrm{arithm}}$  & 30.4 & 7.80\\
1st-order section $\varphi^{\mathrm{LS}}_{\mathrm{inductive}}$  & 30.3  & 7.77\\
1st-order geodesic $\varphi^{\mathrm{LG}}$ & 30.3  & 7.77\\
 B\'ezier  section $\varphi^{\mathrm{BS}}_{\mathrm{one}}$ & 20.8  & 4.91 \\
 B\'ezier  section $\varphi^{\mathrm{BS}}_{\mathrm{arithm}}$ & 20.5 & 4.87 \\
 B\'ezier  section $\varphi^{\mathrm{BS}}_{\mathrm{inductive}}$ & 20.4  & 4.85\\
 B\'ezier geodesic $\varphi^{\mathrm{BG}}$& 20.3  & 4.79 \\
\hline
\end{tabular}
\caption{Average (squared) distance separating a given test point $C(\theta_i, W_i)$ from the corresponding interpolation point on the different surfaces. For the methods defined on a section of the manifold, the subscript of $\varphi$ indicates how the section was chosen: `one' means that we use one of the data matrices (here, the matrix at the lower left corner of the patch) as basis of the section, while `arithm' and `inductive' denote, respectively, the arithmetic and inductive means of the corners of the patch.}
\label{table:interpresultscomparison}
\end{table}

Some takeaways from this study are as follows. First, the matrix chosen to define the section in either the first-order sectional covariance function or the higher-order sectional covariance function seems to have little impact. Moreover, the performance of the geodesic covariance functions is not noticeably better than that of the sectional functions in this setting. But the interpolation performance of the higher-order (B\'{e}zier) families is significantly better than that of the first-order families.

% takeaways: bezier better than geodesic. Choice of section has little impact. Geodesic not much better than sectional (despite different shape of cost functions in figure 7).

%\emcomm{We should run again the test script for this table, as the results presented were obtained on the whole initial data set, which included the rwo $W = 1$ that we decided not to consider further. This explains also the bad behavior of the B\'ezier section approach, for which the foot of the section was chosen as the lower left data matrix, i.e., a point associated to the value $W = 1$. Finally, we should use a normalized error measure, as in the next test.}

\subsubsection{Identification errors and data compression}\label{sec:compression}

We now assess identification errors within the covariance families. In other words, we now use the techniques of \Cref{sec:results} to minimize the distance from each element of the test set to the covariance family $\varphi^{\text{method}}$, built patchwise from the training matrices. From another perspective, this process can be viewed as data compression: a simple way to perform data compression consists of storing only several matrices (in our case, the training data) and then storing, for any additional matrix, the coordinates of the closest point in the surface. We now compare our surfaces for this task. Similarly to the previous section, we use the following two error measures,
\begin{align*}
     E^* \left ( C(\theta_i,W_i) \right ) &=  \left  \| C(\theta_i,W_i) - \varphi(t_1^*(i),t_2^*(i))\right  \|^2_F, \\
     E^*_{{N}} \left ( C(\theta_i,W_i) \right ) &= \frac{ 100  \times \left  \| C(\theta_i,W_i) - \varphi(t_1^*(i),t_2^*(i)) \right  \|^2_F}{\frac{1}{4} \sum_{j = 1}^4 \left \|  C(\theta_i,W_i) - A_j \right \|^2_F},
\end{align*}
where $A_j$ are the four corners of the patch to which $C(\theta_i,W_i)$ belongs and $t_1^*(i),t_2^*(i)$ are the solutions to the optimization problem discussed in Section~\ref{sec:results}.

We evaluate these errors for every test matrix and report, in \Cref{table:august1resultscomparison}, the average errors for each surface definition proposed. These are essentially the average distance between an element of our test set and the closest point on the surface. A key takeaway from this table is that the error are significantly lower than those in \Cref{table:interpresultscomparison}; this is not surprising, as here we are optimizing to find the best point in each family. Also, results with the geodesic families in this example appear to be slightly better than with the sectional families.

\begin{table}[!h]
\centering
\begin{tabular}{|p{4cm}|p{3.3cm}|p{3.3cm}|}
\hline
& $\text{avg}_i \left [ E^* (C(\theta_i,W_i) ) \right ]$ &
  $\text{avg}_i \left [ E^*_{N} \left (C(\theta_i,W_i)\right ) \right ]$ \\
 \hline
1st-order section $\varphi^{\mathrm{LS}}_{\mathrm{one}}$ & 21.8 & 5.30\\
1st-order section $\varphi^{\mathrm{LS}}_{\mathrm{arithm}}$  & 21.8 & 5.29\\
1st-order section $\varphi^{\mathrm{LS}}_{\mathrm{inductive}}$  & 21.8  & 5.30\\
1st-order geodesic $\varphi^{\mathrm{LG}}$& 20.9  & 5.10\\
 B\'ezier  section $\varphi^{\mathrm{BS}}_{\mathrm{one}}$ & 14.3  & 3.34 \\
 B\'ezier  section $\varphi^{\mathrm{BS}}_{\mathrm{arithm}}$ & 14.0  & 3.29 \\
 B\'ezier  section $\varphi^{\mathrm{BS}}_{\mathrm{inductive}}$ & 14.0  & 3.29\\
 B\'ezier geodesic $\varphi^{\mathrm{BG}}$& 13.8  & 3.24\\
\hline
\end{tabular}
\caption{Average (squared) distance separating a given test point from its closest approximation on the different surfaces. For methods defined on a section of the manifold, the subscript of $\varphi$ indicates how the section was chosen: `one' means that we use one of the data matrices (here, the matrix at the lower left corner of the patch) as basis of the section, while `arithm' and `inductive' denote, respectively, the arithmetic and inductive means of the corners of the patch.}
\label{table:august1resultscomparison}
\end{table}

It is instructive to see how the normalized errors $E^*_{{N}} \left ( C(\theta_i,W_i)\right )$ are distributed over the data set, i.e., how the approximation error depends on the parameters of the data matrices $C(\theta_i,W_i)$. We illustrate this distribution using the stem plot in \Cref{fig:distrError}, for $\varphi^{\text{BG}}$ only. Errors are largest for wind field headings between those of the training set (i.e., $\theta \in \{2.8, 8.4, 14.1, 19.7\}$) and increase strongly with the wind field magnitude. These trends indicate that it might be useful to generate a denser grid of data matrices in the $\theta$ direction, particularly for large $W$, and to use a coarser grid in the $W$ direction.

\begin{figure}[h!]
\centering
\includegraphics[scale=0.5]{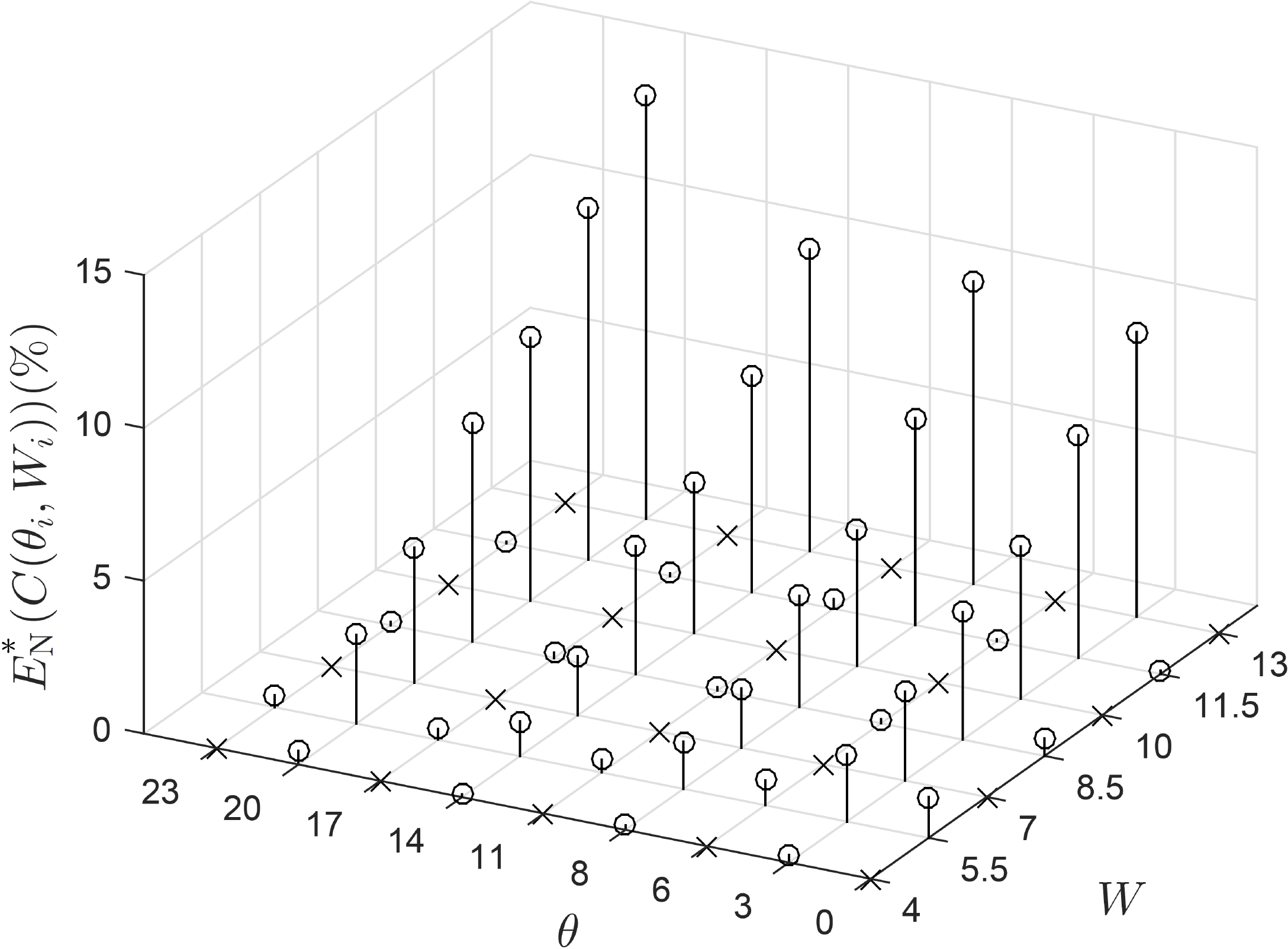}
\caption{Distribution of the errors obtained with the higher-order geodesic covariance family $\varphi^{\mathrm{BG}}$. Crosses are training points, circles are test points. It is more difficult to recover the data when varying $\theta$, particularly at larger $W$. Interpolation in the $W$ direction, on the other hand, yields very small errors.}
\label{fig:distrError}
\end{figure}

%\subsection{Comparison with standard parametric covariance functions}
%The Mat\'{e}rn covariance between two points separated by d distance units is given by:
%\begin{displaymath}
%{\displaystyle C_{\nu }(d)=\sigma ^{2}{\frac {2^{1-\nu }}{\Gamma (\nu )}}{\Bigg (}{\sqrt {2\nu }}{\frac {d}{\rho }}{\Bigg )}^{\nu }K_{\nu }{\Bigg (}{\sqrt {2\nu }}{\frac {d}{\rho }}{\Bigg )}},
%\end{displaymath}
%
%where $\Gamma$  is the gamma function, $K_{\nu}$ is the modified Bessel function of the second kind, and $\sigma$, $\rho$ and $\nu$ are non-negative parameters of the covariance.
%
%Notice that this covariance class has three parameters. However, given a set of parameters, for any two points separated by distance $d$, the covariance will be the exact same. This property is called stationarity. The covariance we are trying to ascertain is clearly non-stationary. It suffices to look at \Cref{fig:casestudy} (left, shades of grey) to understand that when $d=0$, the variance fluctuates by the position.

\section{Conclusions}\label{sec:conclu}
We have presented a differential geometric framework for constructing parametric low-rank covariance families, by connecting low-rank covariance matrices obtained at representative problem instances. In this sense, our framework creates {parametric} covariance families that can easily incorporate prior knowledge or {empirical} information, via the given data matrices or anchors. Within this broad framework, we have proposed several different constructions that rely on geodesics on the manifold of positive-semidefinite matrices, or on affine sections of this manifold. We presented particular instances of such covariance functions that interpolate grids of data matrices, using either first- or higher-order (B\'{e}zier) approaches.

Given some data and the resulting sample covariance matrix (which can be strongly rank-deficient) we show how to perform minimum-distance covariance identification, that is, how to find the closest element of a given covariance family. We discuss methods and algorithms to solve this problem for each of the proposed covariance functions. In a case study involving wind velocity field approximation, we assess the ability of our covariance families to represent out-of-family covariance matrices. We also demonstrate the possibility of using this technique for data compression, i.e., storing the parameters of the family corresponding to a particular matrix, instead of the matrix itself.

Moreover, if the data matrices are labelled by some characteristics of the problem, we observe that our covariance functions achieve a natural or desirable behavior---in that the inputs to these functions (related to distance along a geodesic or a sectional projection) are closely related to the label parameters themselves. For instance, our case study shows that a family constructed from covariance matrices obtained at very different wind headings contains matrices that match all intermediate headings, and that the input parameter to the covariance function and the wind heading can be nearly linearly related.

An advantage of the proposed framework is that one can choose the rank of the resulting matrices. To illustrate, in the case study, although the dimension of the covariance matrices is high ($n=3024$), we can reduce the rank to $r=20$, saving considerable computational cost while still obtaining good approximations.

We also note that this paper has focused on the definition and construction of low-rank covariance families, and on efficient optimization methods for solving identification problems within these families. An interesting and complementary line of work could analyze certain associated statistical questions, i.e., statistical properties of the minimum distance estimator as a function of the sample size $q$, the matrix dimension $n$, and the chosen rank $r$. We defer such investigations to future work.

\section*{Implementation code}
The code to generate covariance functions and to perform identification and interpolation is available at \url{https://github.com/EMassart/covariance_fitting}.

%% MOVED TO TITLE PAGE ALREADY
% \section*{Acknowledgments}
% This work was supported by (i) ``la Caixa'' Banking Foundation (ID 100010434) under project LCF/BQ/AN13/10280009, (ii) the Fonds de la Recherche Scientifique --
% FNRS and the Fonds Wetenschappelijk Onderzoek -- Vlaanderen under EOS
% (Project 30468160), (iii) the ``Communaut\'e fran\c{c}aise de Belgique -
% Actions de Recherche Concert\'ees'' (Contract ARC 14/19-060),  (iv) the WBI-World excellence fellowship, and (v) the US Department of Energy, Office of Advanced Scientific Computing Research (ASCR).

\appendix

\section{Computation of the tools required for the variable projection method}\label{ap:variableprojection}
We detail here the computations for the main steps of the variable projection method proposed in \Cref{sec:variablemin}. Remember that the corresponding surface, $\varphi^{\mathrm{LG}}(t_1,t_2) = Y_{\varphi}(t_1,t_2) Y_{\varphi}(t_1,t_2)^\T$, where $Y_{\varphi}$ is obtained by composition of two geodesics, respectively along the $t_1$ and $t_2$ variables:
\begin{align*}
    Y_{\varphi}(t_1, t_2) &\coloneqq (1-t_2) Y_{1-2}(t_1) + t_2 Y_{3-4}(t_1)Q(t_1)^\T \:, \\
    Y_{1-2}(t_1) &\coloneqq (1-t_1) Y_1 + t_1 Y_2 Q_{1-2}^\T \:, \\
    Y_{3-4}(t_1) &\coloneqq (1-t_1) Y_3 + t_1 Y_4 Q_{3-4}^\T \:.
\end{align*}

\subsection{Computation of the partial derivative with respect to $t_1$}
Computing the partial derivative of the function $(t_1,t_2) \mapsto \varphi^{\mathrm{LG}} (t_1,t_2)$ with respect to $t_1$ yields:
\[D \varphi^{\mathrm{LG}} (t_1,t_2) [e_1] = D Y_{\varphi} (t_1,t_2)[e_1] \ Y_{\varphi} (t_1,t_2)^\T +  Y_{\varphi} (t_1,t_2) \ D Y_{\varphi} (t_1,t_2)[e_1]^\T, \]
with
\[D Y_{\varphi} (t_1,t_2)[e_1] = (1-t_2) \dot Y_{1-2}(t_1)  + t_2 \dot  Y_{3-4}(t_1) Q(t_1)^\T + t_2 Y_{3-4}(t_1) \dot  Q(t_1)^\T. \]
The values of $\dot Y_{1-2}(t_1)$ and $\dot Y_{3-4}(t_1)$ are independent of $t_1$:
\[ \dot Y_{1-2}(t_1) = - Y_1 + Y_2  Q_{1-2}^\T\: , \qquad \dot Y_{3-4}(t_1) = - Y_3 + Y_4  Q_{3-4}^\T\:, \qquad \forall t_1. \]
The value of $\dot Q(t_1)$ can be obtained as follows. Recall from the geodesic definition that $Q(t_1)$ is the orthogonal factor of the polar decomposition of the matrix $M(t_1) = Y_{1-2}(t_1)^\T Y_{3-4}(t_1),$ which means that there exists a symmetric positive definite matrix $H(t_1)$ such that $M(t_1) = H(t_1)Q(t_1)$. Then,
\begin{equation}
\dot M(t_1) = \dot H(t_1) Q(t_1) + H(t_1) \dot Q(t_1),
\label{eq:Mt1}
\end{equation}
where $\dot H(t_1)$ is a symmetric matrix, and $\dot Q(t_1)$ is of the form $\dot Q(t_1) =  \Omega(t_1) Q(t_1) $, with $\Omega(t_1)$ a skew-symmetric matrix. Right-multiplying this expression by $Q(t_1)^\T$ yields:
\begin{equation}
\dot M(t_1) Q(t_1)^\T = \dot H(t_1)  + H(t_1) \Omega(t_1),
\label{eq:Mt1QT}
\end{equation}
while left-multiplying the transpose of equation~\eqref{eq:Mt1} by $-Q(t_1)$ yields:
\begin{equation}
- Q(t_1) \dot M(t_1)^\T = - \dot H(t_1) + \Omega(t_1) H(t_1).
\label{eq:QMt1}
\end{equation}
Now, summing equations~\eqref{eq:Mt1QT} and~\eqref{eq:QMt1} yields:
\[ \dot M(t_1) Q(t_1)^\T - Q(t_1) \dot M(t_1)^\T =  H(t_1) \Omega(t_1) +\Omega(t_1) H(t_1). \]
As a result, the term $\dot Q_1(t_1)$ can be obtained by solving a Sylvester equation.  Moreover, since $H(t_1)$ is always positive definite (except in the set of zero measure corresponding to low-rank matrices $Y_{1-2}(t_1)^\T Y_{3-4}(t_1)$), the solution to the Sylvester equation is unique ($H(t_1)$ and -$H(t_1)$ have no common eigenvalues).
\subsection{Computation of $t_2^*(t_1)$}
The first-order optimality condition
\[ \left.\frac{\partial f}{\partial t_2}\right|_{(t_1 t_2^*(t_1))} = 0 \]
implies that the optimal value $t_2^*(t_1)$ corresponding to an arbitrary value $t_1$ is the solution to a cubic equation:
\begin{equation}
s_1(t_1) t_2^3(t_1) + s_2(t_1) t_2^2(t_1) + s_3(t_1) t_2(t_1) + s_4(t_1) = 0,
\label{eq:solY2Opt}
\end{equation}
with
\begin{align*}
& s_1(t_1) = 2 \Tr{R^2} = 2 \sum_i \sum_j R_{ij}^2, \\
& s_2(t_1) = 3 \Tr{R S} = 3 \sum_i \sum_j R_{ij}S_{ij}, \\
& s_3(t_1) = 2 \Tr{R T} + \Tr{S} = 2 \sum_i \sum_j R_{ij}T_{ij} + S_{ij}^2 ,\\
& s_4(t_1) = 2 \Tr{S T} =  2 \sum_i \sum_j S_{ij}T_{ij}.
\end{align*}
The matrices $R$, $S$, and $T$ arising in those expressions are defined as:
\begin{align*}
R &= Y_{1-2}Y_{1-2}^\T + Y_{3-4}Y_{3-4}^\T - \left( Y_{1-2} Q Y_{3-4}^\T + Y_{3-4} Q^\T Y_{1-2}^\T \right), \\
S &=  \left( Y_{1-2} Q Y_{3-4}^\T + Y_{3-4} Q^\T Y_{1-2}^\T \right) -  2  Y_{1-2}Y_{1-2}^\T\;, \\
T &=  Y_{1-2}Y_{1-2}^\T - \widehat{C},
\end{align*}
with all these matrices depending on $t_1$. Observe, however, that for a fixed value of $t_1$, the function $ t_2 \rightarrow f(t_1,t_2)$ might not be convex; hence, the condition~\eqref{eq:solY2Opt} might have several (up to three) real solutions. In that case, we compute the value of the cost function at those solutions, and we choose the one corresponding to the smallest value of $f$.

\subsection{Gradient descent for the univariate cost function}
We are now looking for the derivative of the cost function $\tilde f(t_1) = f(t_1, t_2^*(t_1))$, with respect to the variable $t_1$, in order to be able to apply a steepest descent method to that problem. Using the notation $\tilde f = F \circ \tilde {\varphi}^{\mathrm{LG}}$, with $\tilde{ \varphi}^{\mathrm{LG}}(t_1) =  \varphi^{\mathrm{LG}}(t_1,t_2^*(t_1))$, we have:
\[ \dot{\tilde f}(t_1) = DF[ \dot{\tilde{\varphi}}^{\mathrm{LG}} (t_1)] = 2 \Tr{\dot{\tilde{\varphi}}^{\mathrm{LG}}(t_1) (\tilde {\varphi}^{\mathrm{LG}}(t_1) - \widehat{C})^\T}. \]
The derivative $\dot{\tilde{\varphi}}^{\mathrm{LG}}(t_1)$ is given by:
\begin{equation*}
\dot{\tilde{\varphi}}^{\mathrm{LG}}(t_1) = \dot {Y}_{\tilde{ \varphi}}(t_1) Y_{\tilde{\varphi}}(t_1)^\T + Y_{\tilde{\varphi}} \dot{Y}_{\tilde {\varphi}}(t_1)^\T.
\end{equation*}
Using the chain rule,
\begin{equation*}
\dot{Y}_{\tilde {\varphi}}(t_1) = \frac{\partial Y_{\varphi}}{\partial t_1}(t_1) +   \frac{\partial Y_{ \varphi}}{\partial t_2}(t_2^*(t_1)) \dot t_2^*(t_1).
\end{equation*}
By the definition of $t_2^*(t_1)$, the term $\frac{\partial Y_{\gamma}}{\partial t_2}(t_2^*(t_1))$ is equal to zero. As a result, $\dot{Y}_{\tilde \varphi}(t_1) = \frac{\partial Y_{\varphi}}{\partial t_1}(t_1)$, which has been computed earlier.

\bibliographystyle{siamplain}
\bibliography{references}

\newcommand{\noopsort}[1]{} \newcommand{\printfirst}[2]{#1}
  \newcommand{\singleletter}[1]{#1} \newcommand{\switchargs}[2]{#2#1}
\begin{thebibliography}{10}

\bibitem{Absil2016a}
{\sc P.-A. Absil, P.-Y. Gousenbourger, P.~Striewski, and B.~Wirth}, {\em
  Differentiable piecewise-{B\'{e}}zier interpolation on {R}iemannian
  manifolds}, in Proceedings of the 26th European Symposium on Artificial
  Neural Networks, Computational Intelligence and Machine Learning (ESANN2016),
  Springer, 2016, pp.~95--100.

\bibitem{absil2016differentiable}
{\sc P.-A. Absil, P.-Y. Gousenbourger, P.~Striewski, and B.~Wirth}, {\em
  Differentiable piecewise-{B}{\'e}zier surfaces on {R}iemannian manifolds},
  SIAM Journal on Imaging Sciences, 9 (2016), pp.~1788--1828.

\bibitem{Bergmann2018a}
{\sc R.~Bergmann and P.-Y. Gousenbourger}, {\em {A variational model for data
  fitting on manifolds by minimizing the acceleration of a B\'ezier curve}},
  Frontiers in Applied Mathematics and Statistics, 4 (2018).

\bibitem{bonnabel2009riemannian}
{\sc S.~Bonnabel and R.~Sepulchre}, {\em Riemannian metric and geometric mean
  for positive semidefinite matrices of fixed rank}, SIAM Journal on Matrix
  Analysis and Applications, 31 (2009), pp.~1055--1070.

\bibitem{Boumal2011}
{\sc N.~Boumal and P.-A. Absil}, {\em A discrete regression method on manifolds
  and its application to data on {$\mathrm{SO}(n)$}}, IFAC Proceedings Volumes
  (IFAC-PapersOnline), 44 (2011), pp.~2284--2289.

\bibitem{cai2011adaptive}
{\sc T.~Cai and W.~Liu}, {\em Adaptive thresholding for sparse covariance
  matrix estimation}, Journal of the American Statistical Association, 106
  (2011), pp.~672--684.

\bibitem{cai2011constrained}
{\sc T.~Cai, W.~Liu, and X.~Luo}, {\em A constrained $\l$-1 minimization
  approach to sparse precision matrix estimation}, Journal of the American
  Statistical Association, 106 (2011), pp.~594--607.

\bibitem{cococcioni2015adaptive}
{\sc M.~Cococcioni, B.~Lazzerini, and P.~F. Lermusiaux}, {\em Adaptive sampling
  using fleets of underwater gliders in the presence of fixed buoys using a
  constrained clustering algorithm}, in Proc. of OCEANS'15, Genova, Italy, May
  18-21, 2015.

\bibitem{conti2008full}
{\sc C.~Conti, M.~Cotronei, and T.~Sauer}, {\em Full rank positive matrix
  symbols: interpolation and orthogonality}, BIT Numerical Mathematics, 48
  (2008), pp.~5--27.

\bibitem{cover2012elements}
{\sc T.~M. Cover and J.~A. Thomas}, {\em Elements of information theory}, John
  Wiley \& Sons, 2012.

\bibitem{cressie1990origins}
{\sc N.~Cressie}, {\em The origins of kriging}, Mathematical geology, 22
  (1990), pp.~239--252.

\bibitem{cressie1992statistics}
{\sc N.~Cressie}, {\em Statistics for spatial data}, vol.~4, Wiley Online
  Library, 1992.

\bibitem{cressie1980robust}
{\sc N.~Cressie and D.~M. Hawkins}, {\em Robust estimation of the variogram},
  Journal of the International Association for Mathematical Geology, 12 (1980),
  pp.~115--125.

\bibitem{cressie1999classes}
{\sc N.~Cressie and H.-C. Huang}, {\em Classes of nonseparable, spatio-temporal
  stationary covariance functions}, Journal of the American Statistical
  Association, 94 (1999), pp.~1330--1339.

\bibitem{csiszar2004information}
{\sc I.~Csisz{\'a}r, P.~C. Shields, et~al.}, {\em Information theory and
  statistics: A tutorial}, Foundations and Trends in Communications and
  Information Theory, 1 (2004), pp.~417--528.

\bibitem{doekemeijer2017ensemble}
{\sc B.~Doekemeijer, S.~Boersma, L.~Y. Pao, and J.-W. van Wingerden}, {\em
  Ensemble {K}alman filtering for wind field estimation in wind farms}, in 2017
  American Control Conference (ACC), IEEE, 2017, pp.~19--24.

\bibitem{driscoll1998consistent}
{\sc J.~C. Driscoll and A.~C. Kraay}, {\em Consistent covariance matrix
  estimation with spatially dependent panel data}, Review of economics and
  statistics, 80 (1998), pp.~549--560.

\bibitem{friedman2008sparse}
{\sc J.~Friedman, T.~Hastie, and R.~Tibshirani}, {\em Sparse inverse covariance
  estimation with the graphical lasso}, Biostatistics, 9 (2008), pp.~432--441.

\bibitem{furrer2006covariance}
{\sc R.~Furrer, M.~G. Genton, and D.~Nychka}, {\em Covariance tapering for
  interpolation of large spatial datasets}, Journal of Computational and
  Graphical Statistics, 15 (2006), pp.~502--523.

\bibitem{gallager1968information}
{\sc R.~G. Gallager}, {\em Information theory and reliable communication},
  vol.~588, Springer, 1968.

\bibitem{gousenbourger2018data}
{\sc P.-Y. Gousenbourger, E.~Massart, and P.-A. Absil}, {\em Data fitting on
  manifolds with composite {B}{\'e}zier-like curves and blended cubic splines},
  Journal of Mathematical Imaging and Vision, 61 (2018), pp.~645--–671.

\bibitem{Gousenbourger2017}
{\sc P.-Y. Gousenbourger, E.~Massart, A.~Musolas, P.-A. Absil, L.~Jacques,
  J.~M. Hendrickx, and Y.~Marzouk}, {\em {Piecewise-B{\'{e}}zier C1 smoothing
  on manifolds with application to wind field estimation}}, in Proceedings of
  the 27th European Symposium on Artificial Neural Networks, Computational
  Intelligence and Machine Learning (ESANN2017), 2017, pp.~305--310.

\bibitem{guerci1999theory}
{\sc J.~R. Guerci}, {\em Theory and application of covariance matrix tapers for
  robust adaptive beamforming}, IEEE Transactions on Signal Processing, 47
  (1999), pp.~977--985.

\bibitem{guhaniyogi2019multivariate}
{\sc R.~Guhaniyogi and S.~Banerjee}, {\em Multivariate spatial meta kriging},
  Statistics \& probability letters, 144 (2019), pp.~3--8.

\bibitem{Hinkle2014}
{\sc J.~Hinkle, P.~T. Fletcher, and S.~Joshi}, {\em {Intrinsic polynomials for
  regression on Riemannian manifolds}}, Journal of Mathematical Imaging and
  Vision, 50 (2014), pp.~32--52.

\bibitem{journee2010low}
{\sc M.~Journ{\'e}e, F.~Bach, P.-A. Absil, and R.~Sepulchre}, {\em Low-rank
  optimization on the cone of positive semidefinite matrices}, SIAM Journal on
  Optimization, 20 (2010), pp.~2327--2351.

\bibitem{Kacem2018}
{\sc A.~Kacem, M.~Daoudi, B.~{Ben Amor}, S.~Berretti, and J.~C. Alvarez-Paiva},
  {\em {A novel geometric framework on {G}ram matrix trajectories for human
  behavior understanding}}, IEEE Transactions on Pattern Analysis and Machine
  Intelligence (T-PAMI), 42 (2018).

\bibitem{Kim2018}
{\sc K.-R. Kim, I.~L. Dryden, and H.~Le}, {\em {Smoothing splines on Riemannian
  manifolds, with applications to 3D shape space}}, arXiv:1801.04978,  (2018),
  pp.~1--23.

\bibitem{kuchment2016overview}
{\sc P.~Kuchment}, {\em An overview of periodic elliptic operators}, Bulletin
  of the American Mathematical Society, 53 (2016), pp.~343--414.

\bibitem{langelaan2011wind}
{\sc J.~W. Langelaan, N.~Alley, and J.~Neidhoefer}, {\em Wind field estimation
  for small {U}{A}{V}s}, Journal of Guidance, Control, and Dynamics, 34 (2011),
  pp.~1016--1030.

\bibitem{langelaan2012wind}
{\sc J.~W. Langelaan, J.~Spletzer, C.~Montella, and J.~Grenestedt}, {\em Wind
  field estimation for autonomous dynamic soaring}, in Proceedings of the IEEE
  International Conference on Robotics and Automation (ICRA), 2012.

\bibitem{larrabee2014wind}
{\sc T.~Larrabee, H.~Chao, M.~Rhudy, Y.~Gu, and M.~R. Napolitano}, {\em Wind
  field estimation in {UAV} formation flight}, in American Control Conference
  (ACC), 2014, IEEE, 2014, pp.~5408--5413.

\bibitem{lawrance2010simultaneous}
{\sc N.~Lawrance and S.~Sukkarieh}, {\em Simultaneous exploration and
  exploitation of a wind field for a small gliding {U}{A}{V}}, AIAA Guidance,
  Navigation and Control Conference, AIAA Paper, 8032 (2010).

\bibitem{lawrance2011path}
{\sc N.~R. Lawrance and S.~Sukkarieh}, {\em Path planning for autonomous
  soaring flight in dynamic wind fields}, in Proceedings of the IEEE
  International Conference on Robotics and Automation (ICRA), 2011,
  pp.~2499--2505.

\bibitem{ledoit2004well}
{\sc O.~Ledoit and M.~Wolf}, {\em A well-conditioned estimator for
  large-dimensional covariance matrices}, Journal of multivariate analysis, 88
  (2004), pp.~365--411.

\bibitem{ledoit2012nonlinear}
{\sc O.~Ledoit, M.~Wolf, et~al.}, {\em Nonlinear shrinkage estimation of
  large-dimensional covariance matrices}, The Annals of Statistics, 40 (2012),
  pp.~1024--1060.

\bibitem{ledoit2018optimal}
{\sc O.~Ledoit, M.~Wolf, et~al.}, {\em Optimal estimation of a
  large-dimensional covariance matrix under {S}tein's loss}, Bernoulli, 24
  (2018), pp.~3791--3832.

\bibitem{Li2014}
{\sc X.-B. Li and F.~J. Burkowski}, {\em Conformational transitions and
  principal geodesic analysis on the positive semidefinite matrix manifold}, in
  Basu M., Pan Y., Wang J. (eds) Bioinformatics Research and Applications.
  ISBRA 2014. Lecture Notes in Computer Science, vol.~8492, Springer Cham,
  2014, pp.~334--345.

\bibitem{liu2019target}
{\sc J.~Liu, J.~Han, Z.-J. Zhang, and J.~Li}, {\em Target detection exploiting
  covariance matrix structures in {MIMO} radar}, Signal Processing, 154 (2019),
  pp.~174--181.

\bibitem{lolla2015path}
{\sc T.~Lolla, P.~Haley~Jr, and P.~F.~J. Lermusiaux}, {\em Path planning in
  multi-scale ocean flows: Coordination and dynamic obstacles}, Ocean
  Modelling, 94 (2015), pp.~46--66.

\bibitem{massart2018quotient}
{\sc E.~Massart and P.-A. Absil}, {\em {Quotient geometry of the manifold of
  fixed-rank positive-semidefinite matrices}}, SIAM J. Matrix Anal. Appl., 41
  (2020), pp.~171--198.

\bibitem{Massart2019}
{\sc E.~Massart, P.-Y. Gousenbourger, T.~S. Nguyen, T.~Stykel, and P.-A.
  Absil}, {\em {Interpolation on the manifold of fixed-rank
  positive-semidefinite matrices for parametric model order reduction:
  preliminary results}}, in {Proceedings of the 27th European Symposium on
  Artificial Neural Networks, Computational Intelligence and Machine Learning
  (ESANN 2019)}, 2019, pp.~281--286.

\bibitem{mehrmann1993numerical}
{\sc V.~Mehrmann and W.~Rath}, {\em Numerical methods for the computation of
  analytic singular value decompositions}, Electronic Transactions on Numerical
  Analysis, 1 (1993), pp.~72--88.

\bibitem{moakher2006symmetric}
{\sc M.~Moakher and P.~G. Batchelor}, {\em Symmetric positive-definite
  matrices: From geometry to applications and visualization}, in Visualization
  and Processing of Tensor Fields, Springer, 2006, pp.~285--298.

\bibitem{Modin2018}
{\sc K.~Modin, G.~Bogfjellmo, and O.~Verdier}, {\em {Numerical Algorithm for
  C2-splines on symmetric spaces}}, arXiv preprint arXiv:1703.09589,  (2018).

\bibitem{musolas2020geodesically}
{\sc A.~Musolas, S.~T. Smith, and Y.~Marzouk}, {\em Geodesically parameterized
  covariance estimation}, arXiv preprint arXiv:2001.01805,  (2020).

\bibitem{oliver2011recent}
{\sc D.~S. Oliver and Y.~Chen}, {\em Recent progress on reservoir history
  matching: a review}, Computational Geosciences, 15 (2011), pp.~185--221.

\bibitem{oliver1997markov}
{\sc D.~S. Oliver, L.~B. Cunha, and A.~C. Reynolds}, {\em Markov chain {M}onte
  {C}arlo methods for conditioning a permeability field to pressure data},
  Mathematical Geology, 29 (1997), pp.~61--91.

\bibitem{palanthandalam2008wind}
{\sc H.~J. Palanthandalam-Madapusi, A.~Girard, and D.~S. Bernstein}, {\em
  Wind-field reconstruction using flight data}, in 2008 American Control
  Conference, IEEE, 2008, pp.~1863--1868.

\bibitem{pennec2006riemannian}
{\sc X.~Pennec, P.~Fillard, and N.~Ayache}, {\em A {R}iemannian framework for
  tensor computing}, International Journal of Computer Vision, 66 (2006),
  pp.~41--66.

\bibitem{rasmussen2006gaussian}
{\sc C.~E. Rasmussen}, {\em Gaussian processes for machine learning}, MIT
  Press, 2006.

\bibitem{ripley2005spatial}
{\sc B.~D. Ripley}, {\em Spatial statistics}, vol.~575, John Wiley \& Sons,
  2005.

\bibitem{rue2005gaussian}
{\sc H.~Rue and L.~Held}, {\em Gaussian {M}arkov random fields: theory and
  applications}, {CRC} {P}ress, 2005.

\bibitem{Samir2012}
{\sc C.~Samir, P.-A. Absil, A.~Srivastava, and E.~Klassen}, {\em {A
  gradient-descent method for curve fitting on {R}iemannian manifolds}},
  Foundations of Computational Mathematics, 12 (2012), pp.~49--73.

\bibitem{schafer2005shrinkage}
{\sc J.~Schafer and K.~Strimmer}, {\em A shrinkage approach to large-scale
  covariance matrix estimation and implications for functional genomics},
  Applications in Genetics and Molecular Biology, 4 (2005), pp.~1175--1189.

\bibitem{stein2012interpolation}
{\sc M.~L. Stein}, {\em Interpolation of spatial data: some theory for
  kriging}, Springer Science \& Business Media, 2012.

\bibitem{stein2004approximating}
{\sc M.~L. Stein, Z.~Chi, and L.~J. Welty}, {\em Approximating likelihoods for
  large spatial data sets}, Journal of the Royal Statistical Society: Series B
  (Statistical Methodology), 66 (2004), pp.~275--296.

\bibitem{szczapa2019}
{\sc B.~Szczapa, M.~Daoudi, S.~Berretti, A.~D. Bimbo, P.~Pala, and E.~Massart},
  {\em {Fitting, Comparison, and Alignment of Trajectories on Positive
  Semi-Definite Matrices with Application to Action Recognition}}, in {Human
  Behavior Understanding, satellite workshop of the International Conf. on
  Computer Vision 2019 (ICCV2019), arxiv:1908.00646}.

\bibitem{vandereycken2009embedded}
{\sc B.~Vandereycken, P.-A. Absil, and S.~Vandewalle}, {\em Embedded geometry
  of the set of symmetric positive semidefinite matrices of fixed rank}, in
  IEEE/SP 15th Workshop on Statistical Signal Processing, 2009, pp.~389--392.

\bibitem{vandereycken2012riemannian}
{\sc B.~Vandereycken, P.-A. Absil, and S.~Vandewalle}, {\em A {R}iemannian
  geometry with complete geodesics for the set of positive semidefinite
  matrices of fixed rank}, IMA Journal of Numerical Analysis,  (2012), pp.~481
  -- 514.

\bibitem{wolfowitz1957minimum}
{\sc J.~Wolfowitz et~al.}, {\em The minimum distance method}, The Annals of
  Mathematical Statistics, 28 (1957), pp.~75--88.

\bibitem{yang2017real}
{\sc S.~Yang, N.~Wei, S.~Jeon, R.~Bencatel, and A.~Girard}, {\em Real-time
  optimal path planning and wind estimation using {G}aussian process regression
  for precision airdrop}, in 2017 American Control Conference (ACC), IEEE,
  2017, pp.~2582--2587.

\end{thebibliography}
\end{document}